\documentclass[a4paper,UKenglish,cleveref,autoref,thm-restate]{lipics-v2021}

\usepackage[T1]{fontenc}
\usepackage[utf8]{inputenc}
\usepackage{lmodern}
\usepackage{amsmath,amssymb}
\usepackage{microtype}
\usepackage{xspace}
\usepackage[ruled,vlined]{algorithm2e}
\usepackage{flafter}
\usepackage{placeins}

\title{QPTAS for Max Weight Independent Set in Graphs with Strongly Edge-Sublinear Separators}
\titlerunning{QPTAS for MWIS in Graphs with Strongly Edge-Sublinear Separators}

\author{\'{E}douard Bonnet}{CNRS, ENS de Lyon, Université Claude Bernard Lyon 1, LIP UMR 5668, Lyon, France \and \url{http://perso.ens-lyon.fr/edouard.bonnet}}{edouard.bonnet@ens-lyon.fr}{https://orcid.org/0000-0002-1653-5822}{}
\author{Yeonsu Chang}{Department of Mathematics, Hanyang University, Seoul, South Korea \and \url{https://yeonsuchang.com/}}{yeonsu@hanyang.ac.kr}{}{}

\authorrunning{\'E. Bonnet and Y. Chang}

\Copyright{Édouard Bonnet, Yeonsu Chang}

\nolinenumbers
\hideLIPIcs
\makeatletter

\renewcommand{\endabstract}{%
  \par
  \protected@write\@auxout{}{\string\gdef\string\@pageNumberEndAbstract{\thepage}}}
\AtBeginDocument{%
  \hypersetup{%
    pdfsubject={Deterministic quasipolynomial approximation of maximum weight independent set from edge-sublinear separators},
    pdfkeywords={Independent set, approximation scheme, graph separators, induced minors}}}
\makeatother

\newcommand{\mis}{\textnormal{\textsc{Max Independent Set}}\xspace}
\newcommand{\wmis}{\textnormal{\textsc{Max Weight Independent Set}}\xspace}
\newcommand{\swmis}{\textnormal{\textsc{MWIS}}\xspace}
\newcommand{\smis}{\textnormal{\textsc{MIS}}\xspace}
\newcommand{\Oh}{\mathcal{O}}
\newcommand{\tw}{\operatorname{tw}}
\newcommand{\sn}{\operatorname{sn}}
\newcommand{\conv}{\operatorname{conv}}
\newcommand{\E}{\mathbb{E}}
\newcommand{\Solve}{\text{\normalfont\textsc{Solve}}\xspace}
\newcommand{\RoundOrCut}{\text{\normalfont\textsc{RoundOrCut}}\xspace}
\renewcommand{\leq}{\leqslant}
\renewcommand{\geq}{\geqslant}

\SetKw{Return}{return}
\DontPrintSemicolon
\begin{document}
\raggedbottom
\maketitle

\begin{abstract}
  A~hereditary graph class has strongly edge-sublinear separators if, for some constant \mbox{$0\leq b<1$}, its \mbox{$m$-edge} graphs admit balanced separators of size $\Oh(m^b)$.
  We show that \wmis admits a~deterministic quasipolynomial-time approximation scheme (QPTAS) on every such class.
  In particular, this holds on graphs excluding any fixed graph as an induced minor, by a separator theorem of \hyperlink{cite.Korhonen24}{[Korhonen and Lokshtanov, SODA~'24]}.

  Our theorem generalizes the QPTASes for axis-parallel rectangles \hyperlink{cite.Adamaszek13}{[Adamaszek and Wiese, FOCS~'13]} and polygons \hyperlink{cite.Adamaszek19}{[Adamaszek, Har-Peled, and Wiese, JACM~'19]}, or equivalently for packing connected subgraphs of a~planar graph \hyperlink{cite.Pilipczuk18}{[Pilipczuk, van Leeuwen, and Wiese, Algorithmica~'20]}.
  It also generalizes the QPTAS for \swmis on hereditary classes with strongly sublinear separators measured in vertices [\hyperlink{cite.Dvorak18}{Dvo\v{r}\'ak, EJC~'18}; \hyperlink{cite.Dvorak21}{Dvo\v{r}\'ak and Lahiri, ESA~'21}], as well as the QPTASes of \hyperlink{cite.Chudnovsky25}{[Chudnovsky et~al., arXiv~'25]}, \hyperlink{cite.Bonnet26d}{[Bonnet et~al., SWAT~'26]}, and \hyperlink{cite.Czyzewska26}{[Czy\.{z}ewska et~al., arXiv~'26]} for graphs excluding specific induced minors.
\end{abstract}

\section{Introduction}\label{sec:intro}

\mis (\smis) asks for a~largest set of pairwise nonadjacent vertices in a~graph~$G$.
This problem belongs to Karp's list of NP-hard problems~\cite{Karp72}.
Furthermore, H{\aa}stad's hardness result~\cite{Hastad99}, derandomized by Zuckerman~\cite{Zuckerman07}, shows that, for every fixed $\varepsilon>0$, approximating \smis within a~factor~$n^{1-\varepsilon}$ on $n$-vertex graphs is NP-hard.
This makes \smis a~particularly inapproximable problem.

However, in some special classes of graphs, \smis becomes (much more) tractable.
For instance, it admits a~polynomial-time exact algorithm on graphs of bounded treewidth~\cite{Arnborg89}, graphs of bounded clique-width~\cite{Courcelle00,Oum06}, $P_6$-free graphs~\cite{Grzesik22}, and circle graphs~\cite{Gavril73};
a~poly\-nomial-time approximation scheme (PTAS) on planar graphs~\cite{Baker94} and, more generally, classes excluding a~fixed minor~\cite{Grohe03}, on pseudo-disk intersection graphs~\cite{Chan12}, and on represented disk graphs with arbitrary radii~\cite{Erlebach05};
an~exact quasipolynomial-time algorithm on $P_t$-free graphs for every fixed~$t$~\cite{Gartland20} and on graphs with no induced cycle of length at least~$t$ for every fixed $t\geq4$~\cite{Gartland21};
and a~quasipolynomial-time approximation scheme (QPTAS) on represented intersection graphs of axis-parallel rectangles~\cite{Adamaszek13}, polygons~\cite{Adamaszek19}, and equivalently connected subgraphs of a~planar graph~\cite{Pilipczuk18}.

Four years ago, Dallard, Milani\v{c}, and \v{S}torgel~\cite[Question 8.2]{Dallard24c} asked if \smis admits a~polynomial algorithm on every class excluding a~fixed planar graph as an induced minor (i.e., a~graph obtained via vertex deletions and edge contractions).
A~positive solution to this question would have been a~formidable breakthrough: it would have unified, extended, and improved the various algorithms developed on the classes above with a known polynomial or quasipolynomial algorithm (indeed, these classes exclude some planar induced minors).
Independently, Gartland and Lokshtanov conjectured the same~\cite[Conjecture~1.4.2]{Gartland23}, or at~least that a~quasipolynomial-time algorithm exists.

Recently, the authors~\cite{Bonnet26c} negatively resolved Question 8.2 of~\cite{Dallard24c} (and refuted Gartland and Lokshtanov's conjecture) by showing that \smis remains NP-complete in graphs excluding the $5 \times 5$ grid as an induced minor; ruling out, unless NP $\subseteq$ QP, even a~quasipolynomial algorithm in this special case.
This reduction, however, does not rule out approximation schemes.

The existence of a~QPTAS is discussed in~\cite[Section~1.4]{Gartland23}.
A~proposed route is through the so-called \emph{structural induced grid minor conjecture} of Gartland and Lokshtanov~\cite[Conjecture~1.4.1]{Gartland23}.
This is the conjecture that, for every planar graph~$H$ (or equivalently every $k \times k$ grid), every graph excluding~$H$ as an induced minor has a~balanced separator in the neighborhood of $\Oh_H(1)$ vertices.
It is indeed known that proving this conjecture for some graph~$H$ yields a~QPTAS for \smis on $H$-induced-minor-free graphs~\cite[Theorem~1.2.6]{Gartland23}.

\begin{conjecture}[Gartland--Lokshtanov]\label{conj:planar}
  For every planar graph $H$, \mis admits a~QPTAS in $H$-induced-minor-free graphs. 
\end{conjecture}

This conjecture has been repeated in several papers, for example~\cite[Conjecture~7]{Bonnet26c} or \cite{Nikabadi26}, as a~challenging open question.

\subparagraph*{Our result.}
We positively resolve \cref{conj:planar} in much greater generality.
We say that a~hereditary graph class $\mathcal C$ has \emph{strongly edge-sublinear separators} (or equivalently, \emph{strongly edge-sublinear treewidth}) if there are constants~$0 \leq b < 1,$ $c > 0$ (only depending on $\mathcal C$) such that for every $m \geqslant 1$, every $m$-edge graph of $\mathcal C$ has a~balanced separator of size at~most $c \cdot m^b$.

Our algorithm also applies to \wmis (\swmis), where vertices have rational weights and the objective is to maximize the total weight of an independent set.
We show the following theorem.

\begin{restatable}{theorem}{mainresult}\label{thm:main}
  Let $\mathcal C$ have strongly edge-sublinear separators. 
  Then \wmis admits a~QPTAS on~$\mathcal C$.
\end{restatable}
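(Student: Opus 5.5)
We plan a recursive divide-and-conquer in the spirit of the Adamaszek--Wiese QPTAS, with the separator theorem used in a "round-or-cut" fashion. Let $I^*$ be an optimal independent set of $G\in\mathcal C$, and $n=|V(G)|$. First we reduce weights to a polynomial range in the standard way: we guess the heaviest vertex of $I^*$ (weight $w_{\max}$), discard vertices heavier than it, and discard those lighter than $\varepsilon w_{\max}/n$. This loses at most a factor $1-\varepsilon$. The recursive procedure $\Solve(H)$ on an induced subgraph $H$ aims to return an independent set of weight at least $(1-\delta)\,\mathrm{opt}(H)$, with the loss $\delta$ accumulating additively over $\Oh(\log n)$ recursion levels.

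The key step is the following structural claim. Let $J=I^*\cap V(H)$. There is a set $S\subseteq V(H)$ of $\mathrm{polylog}(n)/\varepsilon^{\Oh(1)}$ vertices with the following property. Delete $S$ together with the neighbourhood of $S\cap J$; guess $S\cap J$ and put it into the solution. Then every component of what remains has at most a constant fraction of $|J|$, and the vertices of $J$ that are lost weigh at most $\varepsilon/\log n$ times $w(J)$.

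To prove the claim we consider an auxiliary graph. Contract each vertex of $J$ together with a suitable part of its neighbourhood, or more concretely take the subgraph $H'$ induced by $J\cup N(J)$ after deleting edges inside $N(J)$ and collapsing vertices of $N(J)$ of large degree into $J$. The aim is an edge count $m=\Oh(|J|\cdot\text{polylog})$ while $H'$ stays in a class with strongly edge-sublinear separators; hereditarity is what lets us pass to induced subgraphs. Applying the separator of size $c\,m^b$ gives a separator balanced with respect to weights on $J$. Iterating it $\mathrm{polylog}$ times, as in the Frederickson-style $r$-division argument, then gives pieces with boundary a small fraction of $|J|$. The loss is charged as follows: boundary vertices outside $J$ kill only the $J$-neighbours they touch, and these are controlled because we first discard high-degree interactions. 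The point of measuring in edges rather than vertices is exactly that dense bicliques between $J$ and non-$J$ vertices are what would break vertex separators, and edge-sublinearity handles sparse cases while dense cases are treated by the "round" side. In the round side, if some vertex outside $J$ has many $J$-neighbours, we instead guess it as excluded and branch.

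Algorithmically, $\Solve(H)$ enumerates all candidate sets $S$ of the claimed size, in $n^{\mathrm{polylog}(n)}$ many ways, together with all subsets $S\cap J$. For each, it recurses on the components of $H-(S\cup N(S\cap J))$ and keeps the best combination. Since $|J|$ shrinks by a constant factor per level, the depth is $\Oh(\log n)$, the total running time is quasipolynomial, and the total loss is $\varepsilon$. To keep the depth logarithmic we must also guess the rough value $|J|$ (or $w(J)$) of each subproblem, so that recursion can stop at constant-size optimum solved by brute force.

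The main obstacle is the structural claim: producing an auxiliary graph with $\tilde\Oh(|J|)$ edges that inherits the separator property, because $\mathcal C$ is only hereditary, not closed under contractions. I would first try to take $H$ restricted to $J\cup T$, where $T$ is a hitting set of neighbours chosen greedily or by random sampling with polylog density. Each vertex of $J$ then keeps only polylog representatives, and separating this induced subgraph separates $J$ adequately. Any mistake made this way is charged to the small weight fraction.
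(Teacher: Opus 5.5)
\textbf{There is a genuine gap: the structural claim is false.} Your plan depends on the claim that some $S$ of polylogarithmic size makes $S\cup N(S\cap J)$ a separator that splits $J$ in a balanced way, so that your recursion can enumerate $S$. Because $J$ is independent, $N(S\cap J)$ contains no vertex of $J$. So your loss accounting is vacuous, and the claim really says that $J$ is split by a set dominated by $S\cap J$ plus the few vertices of $S\setminus J$.

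This already fails for planar graphs, which have strongly edge-sublinear separators ($b=1/2$, by Lipton--Tarjan applied to the non-isolated vertices). In the $n\times n$ grid, let $J$ be one colour class of the proper $2$-colouring. Then $X:=S\cup N(S\cap J)$ has at most $5|S|$ vertices. If $|X|<n$, at least $n-|X|$ rows and $n-|X|$ columns avoid $X$. Every such row meets every such column, so all of them lie in one component of the grid minus $X$, which contains at least $(n-|X|)\lfloor n/2\rfloor$ vertices of $J$. Hence $|S|=\Omega(n)$ is forced, which is polynomial in the number $n^2$ of vertices.

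More generally, the separators supplied by the hypothesis have size $c\,m^b$. An abstract hereditary class gives you no polylog-bit description of a separator that is light on $J$. In Adamaszek--Wiese that role is played by a polygon with few corners. Your claim is a Gartland--Lokshtanov-type dominated-separator statement, which cannot hold in this generality.

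The auxiliary graph does not rescue the claim either. A balanced separator of $H[J\cup T]$ need not separate $H$, since components reconnect through $V(H)\setminus(J\cup T)$. Keeping few representatives per vertex of $J$ also does not bound $|E(H[J\cup T])|$: a vertex of $T$ may have many neighbours in $J$, and $T$ itself may be dense.

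\textbf{The missing idea is to \emph{compute} the separator rather than guess it.} The paper uses a Sherali--Adams-type relaxation $R_r(G,k)$. It conditions a query $y$ on a set $A$ of $\Oh(p\log n)$ vertices, whose existence comes from an entropy-decrease argument and which is found by enumeration. This gives costs $x_v=y_{A\cup\{v\}}/y_A$ with $x(V(G))=k$ and $e_G(x)\leq k^2/p$, mimicking the indicator of an independent set.

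A vertex-capacitated Leighton--Rao flow--cut argument then yields a balanced separator $S$ with $x(S)\leq\delta k$. It is balanced in the number of vertices of $G$, so the recursion depth is $\lceil\log_{3/2}n\rceil$ without guessing $|J|$. The alternative outcome, an all-pairs flow, is ruled out. Paths sampled from the flow give an induced almost-embedding of a subcubic expander. Its short paths induce a graph with $\Oh(t\ell^2)$ edges and treewidth $\Omega(t)$, contradicting the hypothesis.

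The recursive calls only return bounds $u_C\geq\alpha(G[C])$. A component with $x(C)>u_C$ therefore becomes the valid cut $\sum_{v\in C}z_{A\cup\{v\}}\leq u_Cz_A$ for the ellipsoid method (round-or-cut). Otherwise the union of the sets $I_C$ loses only a factor $(1-\delta)$ per level. Weights are handled by rounding and independent blow-ups, which keep strongly edge-sublinear treewidth with exponent $1-1/(2a)$.
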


For $G\in\mathcal C$ and rational $0<\varepsilon<1$, the algorithm deterministically computes an independent set~$I$ of~$G$ with weight at~least $(1-\varepsilon)\alpha_w(G)$ in time \mbox{$\exp\bigl((\varepsilon^{-1}\log |V(G)|)^{\Oh_{\mathcal C}(1)}\bigr)+\ell^{\Oh(1)}$}, where~$\alpha_w(G)$ is the maximum weight of an independent set of~$G$ and~$\ell$ is the total binary encoding length of the vertex weights and~$\varepsilon$.
No separator or algorithm for finding separators in~$\mathcal C$ is required as part of the input.

The QPTAS is obtained through a~different route than~\cite[Conjecture~1.4.1]{Gartland23} (which anyway cannot hold in general $H$-induced-minor-free classes, let alone in classes with strongly edge-sublinear separators), so the structural induced grid minor conjecture remains open.
By a~separator theorem of Korhonen and Lokshtanov~\cite{Korhonen24}, any class excluding a~fixed graph~$H$ (not necessarily planar) as an induced minor has strongly edge-sublinear separators with $b=\frac{1}{2}+\varepsilon$ for any $\varepsilon > 0$.
Therefore, \cref{thm:main} implies the following corollary.

\begin{corollary}\label{cor:induced-minor}
  For every fixed graph~$H$, \wmis admits a~QPTAS on $H$-induced-minor-free graphs.
\end{corollary}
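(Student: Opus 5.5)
The corollary will follow directly from \cref{thm:main}. The only work is to check that, for every fixed graph~$H$, the class $\mathcal C_H$ of $H$-induced-minor-free graphs satisfies its hypotheses: $\mathcal C_H$ must be hereditary, and there must be constants $0\leq b<1$ and $c>0$ such that every $m$-edge graph in~$\mathcal C_H$ has a~balanced separator of size at~most $c\cdot m^b$.

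\emph{Heredity.} Let $G\in\mathcal C_H$ and let $G'$ be an induced subgraph of~$G$. Then $G'$ is obtained from~$G$ by vertex deletions. Suppose $H$ were an induced minor of~$G'$. Composing the deletions producing~$G'$ with the deletions and contractions producing~$H$ from~$G'$ would show that $H$ is an induced minor of~$G$, a~contradiction. Hence $G'\in\mathcal C_H$, so $\mathcal C_H$ is closed under taking induced subgraphs.

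\emph{Separators.} We invoke the separator theorem of Korhonen and Lokshtanov~\cite{Korhonen24}, as recalled in the introduction. For any fixed~$H$ and any fixed $\delta>0$, every $m$-edge $H$-induced-minor-free graph has a~balanced separator of size $\Oh_{H,\delta}(m^{1/2+\delta})$. Fix, say, $\delta=\frac14$. This gives $b=\frac34<1$ and a~constant $c=c(H)>0$ with the required bound for all $m\geq1$. The hidden constant may depend on~$H$, which is allowed, since $b$ and~$c$ may depend on the class. If the cited bound is instead stated as $\Oh_H(\sqrt{m}\,\mathrm{polylog}\, m)$, it is still at~most $c\cdot m^{3/4}$ for a~suitable~$c$, so the conclusion is the same.

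Therefore $\mathcal C_H$ has strongly edge-sublinear separators, and \cref{thm:main} yields a~QPTAS for \wmis on~$\mathcal C_H$. Moreover, \cref{thm:main} requires no separator or separator-finding algorithm as input. So we need not worry whether the Korhonen--Lokshtanov separators can be computed efficiently; their existence suffices.

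There is essentially no obstacle here. The one point requiring care is that the separator bound of~\cite{Korhonen24} applies to every graph in the class, including subgraphs of bounded size, and not just to sufficiently large ones. Small values of~$m$ are handled by enlarging~$c$: a~graph with $m$ edges has at~most $2m$ non-isolated vertices, and isolated vertices never need to be in a~separator.
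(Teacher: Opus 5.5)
Your proposal is correct and follows the paper's own argument: the paper deduces the corollary directly from \cref{thm:main}, using the Korhonen--Lokshtanov separator theorem to get exponent $b=\frac12+\varepsilon$, and your $b=\frac34$ is a special case of this. Your explicit checks that the class is hereditary and that small~$m$ is handled are routine details the paper leaves implicit.
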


This unifies the QPTAS guarantee for all the classes listed in the second paragraph, including those admitting faster approximation schemes or exact algorithms, as each excludes a~fixed graph as an induced minor.
We note that having strongly edge-sublinear separators is more general a~property than excluding an induced minor.
For instance, all induced subgraphs of subdivisions of $n$-vertex graphs with at~least $n^{0.1}$ subdivision vertices replacing each edge of the original graph is a~separating example. 

The weighted Nemhauser--Trotter reduction converts a~$(1-\varepsilon)$-approximation for \swmis on a~graph class into a~$(1+\varepsilon)$-approximation for \textsc{Minimum Weight Vertex Cover} on the same class~\cite[Theorem~3.1]{Har-Peled23}.
Thus~\cref{thm:main} also gives a~deterministic QPTAS for this problem on every class with strongly edge-sublinear separators.

The QPTAS guarantee does not extend to all graph optimization problems.
\textsc{Minimum Dominating Set} admits no QPTAS even on $P_5$-free graphs unless $\mathrm{NP}\subseteq\mathrm{QP}$.
This follows from the inapproximability of \textsc{Set Cover}~\cite{Dinur14} via the standard approximation-preserving reduction to split graphs, which are $P_5$-free.
Under the same assumption, \textsc{Maximum Clique} admits no QPTAS on $2K_2$(-induced-minor)-free graphs, by complementation from the APX-hardness of \smis on $C_4$-free graphs~\cite[Theorem~3.6(i)]{Chlebik07Boxes}.

Bonnet and Hickingbotham~\cite{Bonnet25} posed the stronger conjecture that \smis admits a~PTAS when excluding any graph as an induced minor.
This remains open.

\begin{conjecture}[\cite{Bonnet25}]\label{conj:ptas}
  For every graph $H$, \mis admits a~PTAS in $H$-induced-minor-free graphs. 
\end{conjecture}

A~natural strengthening is to ask~\cref{conj:ptas} for classes with strongly edge-sublinear separators, and a~natural weakening is to ask the question for planar graphs~$H$ only.
\Cref{conj:ptas} is even open in the particular case of represented intersection graphs of axis-parallel rectangles.

\subparagraph*{Related work.}
On classes with strongly sublinear separators (measured in vertices), \smis admits a~PTAS~\cite{Har-Peled17} and \swmis a~QPTAS~\cite{Dvorak18,Dvorak21}.
\Cref{thm:main} generalizes the latter to separators measured in edges.
A~PTAS for \swmis is also known on \emph{efficiently fractionally treewidth-fragile} classes~\cite{Dvorak16,Dvorak21}, i.e., where for every fixed $\varepsilon>0$, one can compute in polynomial time a~distribution on vertex sets~$X$ such that $\Pr[v\in X]\leq\varepsilon$ for every vertex~$v$ and each graph~$G-X$ has treewidth bounded in terms of~$\varepsilon$ and the class.

String graphs, the intersection graphs of curves in the plane, have strongly edge-sublinear separators~\cite{Matousek14}.
More generally, Lee~\cite{Lee17} proves such separator bounds for \emph{region intersection graphs} over $K_t$-minor-free hosts, for every fixed~$t$.
These are intersection graphs of connected subgraphs of the host graphs.
\Cref{thm:main} therefore gives a~deterministic QPTAS for \swmis on these classes using only their adjacency relation.
It recovers the schemes for represented intersection graphs of axis-parallel rectangles~\cite{Adamaszek13} and polygons~\cite{Adamaszek19}, and for packing connected subgraphs of a~planar graph~\cite{Pilipczuk18}, without requiring the corresponding representations.

Earlier QPTASes for \swmis under induced-minor exclusion cover $C_t$ for $t\geq4$~\cite[Theorem~1.4]{Chudnovsky24}, $W_\ell$ for $\ell\geq3$~\cite[Corollary~1.3]{Chudnovsky25}, $sC_t$ for $s\geq1$ and $t\geq3$~\cite{Bonnet26d}, and $k\Theta_t$ for $k,t\geq1$~\cite[Theorem~1.11]{Czyzewska26}, with all parameters fixed.
Here~$W_\ell$ is a~cycle of length~$\ell$ with an additional vertex adjacent to every cycle vertex, and~$\Theta_t$ consists of three internally vertex-disjoint paths with common endpoints, each having~$t$ internal vertices.
The notation~$sC_t$ and~$k\Theta_t$ denotes disjoint unions of~$s$ cycles and~$k$ copies of~$\Theta_t$, respectively.
All these results are special cases of~\cref{cor:induced-minor}.
Nevertheless, some of these papers~\cite{Chudnovsky25,Czyzewska26} obtain their QPTAS by showing the still-open structural induced grid minor conjecture (for some planar graphs rather than grids).

\subparagraph*{Proof outline.}
We first deal with unweighted \smis.
The algorithm finds a~balanced separator of small weight with respect to some vertex costs.
We delete the balanced separator, recursively find independent sets in the remaining components, and take their union.
The separator choice and the vertex costs are informed by a~linear programming relaxation. 
The algorithm either finds a~suitable approximation or yields a~valid inequality that is added to the linear program.
This inequality is given to the ellipsoid algorithm.
This technique is called round-or-cut.
We now detail every step.
The reader may want to go through~\cref{alg:solve,alg:round-or-cut} before or after reading the more detailed outline, which follows.

\Cref{sec:separators} presents an algorithm that finds a~balanced separator of~$G$ of small total cost when the vertex costs $x \in [0,1]^{V(G)}$ satisfy some conditions.
(For $X\subseteq V(G)$, we set $x(X):=\sum_{v\in X}x_v$.)
More precisely, for any small positive $\delta$, there is a~threshold~$p=(\delta^{-1}\log n)^{\Oh(1)}$, such that
\begin{equation}\label{eq:outline-separator-conditions}
  x(V(G))=k\geq p
  \qquad\text{and}\qquad
  \sum_{uv\in E(G)}x_ux_v\leq k^2/p
\end{equation}
guarantee a~balanced separator~$S$ with $x(S) \leq \delta k$.
The proof uses a~vertex-capacitated version of the Leighton--Rao flow--cut theorem~\cite{Leighton99}.
If the separator is not found, the resulting flow lets us represent the edges of a~subcubic graph of large treewidth by sampling paths; this follows Korhonen and Lokshtanov~\cite{Korhonen24}.
The assumptions on~$x$ control the lengths, intersections, and adjacencies of these paths.
The short paths induce a~graph violating the assumption that the class has strongly edge-sublinear treewidth.

For a~target $k>p$ for the size of the independent set, \cref{sec:relaxation} introduces a~type of Sherali--Adams relaxation~\cite{Sherali90} with variables~$y_A$ indexed by polylogarithmically small vertex sets; see~\eqref{eq:relaxation}.
For every independent set~$I$ of size~$k$, the assignment $y_A=\mathbf 1[A \subseteq I]$ satisfies its constraints.
For each~$k$, we apply the ellipsoid algorithm to the convex hull of the encodings of independent sets of size~$k$, using the original relaxation constraints and further inequalities obtained from recursive calls.

If the query~$y$ satisfies the constraints of the relaxation, an entropy-decrease argument shows that conditioning on a~set~$A$ with $|A|=\Oh(p\log n)$ and $y_A>0$ gives costs $x_v:=y_{A\cup\{v\}}/y_A$ satisfying~\eqref{eq:outline-separator-conditions}.
(Conditioning on~$A$ fixes $x_v=1$ for every $v \in A$.)
Adding a~vertex to~$A$ requires recomputing all these ratios, so the costs outside~$A$ change, too.

The entropy of the distribution~$(x_v/k)_{v\in V(G)}$ lies between~$0$ and~$\log n$.
Whenever $\sum_{uv\in E(G)}x_ux_v>k^2/p$, conditioning on a~suitable vertex decreases the entropy by more than~$2/p$, which upper-bounds the number of vertices added to~$A$.
We find~$A$ deterministically by enumerating all sets up to this size, which is polylogarithmic with our choice of parameters.
Targets $k \leq p$ are handled by exhaustive search.

In conclusion, \cref{sec:relaxation} gives us vertex costs $x_v$ satisfying \eqref{eq:outline-separator-conditions}, and with~\cref{sec:separators}, we obtain a~balanced separator $S$ such that $x(S) \leq \delta k$ from a~positive query.
In~\cref{sec:algorithm}, we use this separator~$S$ to recursively compute an independent set~$I_C$ and an upper bound $u_C \geq \alpha(G[C])$ for each component~$C$ of~$G-S$.
Let~$\rho$ be a~common approximation factor guaranteed by the child calls, so that $|I_C| \geq \rho u_C$ for every~$C$.
Note that $I := \bigcup_C I_C$ is an independent set because distinct components have no edges between them.
If $x(C)\leq u_C$ for every component $C$, then 
\[
  |I|=\sum_C|I_C|
  \geq\rho\sum_Cu_C
  \geq\rho\sum_Cx(C)
  =\rho(k-x(S))
  \geq\rho(1-\delta)k.
\]
Each component has at most two thirds as many vertices as its parent, so there are at most~$\lceil\log_{3/2}n\rceil$ separator steps along any branch of the recursion tree.
Starting from exact solutions at the leaves, these factors multiply to give an independent set of size at least~$(1-\delta)^{\lceil\log_{3/2}n\rceil}k$.

Otherwise, a~component with $x(C)>u_C$ yields the inequality $\sum_{v\in C} z_{A\cup\{v\}} \leq u_C z_A$ in variables~$z$ with the same coordinates as~$y$.
This inequality holds for every vector $z$ encoding an independent set of size~$k$ but is violated by the current query~$y$.

We pass this inequality to the ellipsoid algorithm, which generates another query or certifies infeasibility.
This is the round-or-cut approach of Carr et~al.~\cite{CarrFLP00}.
The procedure eventually constructs a~sufficiently large independent set or proves $\alpha(G)<k$.
Trying targets in decreasing order establishes the upper bounds required by the recursion: all targets above the first successful one have been ruled out.
Choosing $\delta:=\varepsilon/\lceil\log_{3/2}n\rceil$ bounds the total approximation loss by~$\varepsilon$, since Bernoulli's inequality gives
\[
  (1-\delta)^{\lceil\log_{3/2}n\rceil}
  \geq 1-\delta\lceil\log_{3/2}n\rceil
  =1-\varepsilon.
\]
\Cref{subsec:time} shows that the total running time is quasipolynomial.

\Cref{sec:weights} reduces \swmis to \smis by scaling and rounding the weights to integers, then replacing each vertex by that many pairwise nonadjacent copies.
We prove that this operation preserves strongly edge-sublinear separators.
Applying the unweighted QPTAS to the resulting graph and returning the original vertices with a~selected copy gives the required approximation for \swmis.

\section{Preliminaries}\label{sec:prelim}

We write~$\log$ for the base-2 logarithm.
Subscripts in asymptotic notation indicate which parameters the implicit constants may depend on.
For a~statement~$P$, the indicator~$\mathbf1[P]$ equals~1 if~$P$ holds and~0 otherwise.

All graphs are finite, simple, and undirected unless explicitly stated otherwise.
For a~graph~$G$, we write $V(G)$ and $E(G)$ for its vertex and edge sets.
We denote by~$\alpha(G)$ the \emph{independence number} of $G$, i.e., the size of a~largest independent set of~$G$.
For $X\subseteq V(G)$, we denote by $G[X]$ the subgraph induced by~$X$, and set $G-X:=G[V(G)\setminus X]$.
For $v\in V(G)$, its open neighborhood is $N_G(v):=\{u\in V(G):uv\in E(G)\}$.
An \emph{induced minor} is obtained by vertex deletions and edge contractions.
A~graph is \emph{subcubic} if its maximum degree is at most~3.
A~graph class is \emph{hereditary} if it is closed under taking induced subgraphs.

For a~vector~$x$ indexed by~$V(G)$ with nonnegative entries, we set \[x(X):=\sum_{v\in X}x_v.\]

A~\emph{tree-decomposition} of~$G$ consists of a~tree~$T$ and a~family $\{B_t \subseteq V(G) \mid t \in V(T)\}$ that satisfies the following two conditions:
\begin{itemize}
\item for every $v \in V(G)$, $T[\{t \in V(T) \mid v \in B_t\}]$ is a~nonempty connected subtree of~$T$, and
\item for every $uv \in E(G)$, there is some $t \in V(T)$ such that $\{u,v\} \subseteq B_t$.
\end{itemize}
A~set $B_t$ is called a~\emph{bag}.
The width of the decomposition is $\max_t |B_t|-1$, and the \emph{treewidth}~$\tw(G)$ is the minimum width of a~tree-decomposition of~$G$.

In this paper, for convenience, connected components (or components) are vertex sets (not graphs).
A~\emph{balanced separator} of an $n$-vertex graph~$G$ is a~set $S\subseteq V(G)$ such that every connected component of $G-S$ has at most $2n/3$ vertices.
By the \emph{components of a~separator}~$S$, we mean the connected components of~$G-S$.
The \emph{separation number}~$\sn(G)$ is the least integer~$s$ such that every (induced) subgraph of~$G$ has a~balanced separator of size at most~$s$.
Treewidth and separation number are linearly tied.
The upper bound on the separation number is due to Robertson and Seymour~\cite{Robertson86a}, and the converse is due to Dvo\v{r}\'ak and Norin~\cite{Dvorak19}.

\begin{theorem}\label{thm:treewidth}
  For every graph~$G$ with at least one edge, $\tw(G)=\Theta(\sn(G))$.
\end{theorem}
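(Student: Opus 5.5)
This theorem is the classical equivalence between treewidth and separation number, and I would prove it as two inequalities.

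\textbf{Upper bound $\sn(G)=\Oh(\tw(G))$.} This is the standard argument of Robertson and Seymour. Let $H$ be any induced subgraph of~$G$. Then $\tw(H)\leq\tw(G)$, since restricting each bag of an optimal decomposition of $G$ to $V(H)$ gives a tree-decomposition of $H$ of no larger width. Fix a tree-decomposition $(T,\{B_t\})$ of $H$ of width $\tw(H)$ and orient each edge $st$ of $T$ as follows. Removing $st$ splits $T$ into two subtrees, and $H-(B_s\cap B_t)$ has its vertices split accordingly. Orient $st$ toward the side containing more than $|V(H)|/2$ vertices not in $B_s\cap B_t$; if neither side has that many, leave it unoriented. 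Since $T$ is finite, following the orientations ends at a node $t$ such that no incident edge points away from $t$. Every component $C$ of $H-B_t$ lies in one subtree of $T-t$, because components are connected and the decomposition properties force this. Hence $|C|\leq|V(H)|/2\leq 2|V(H)|/3$. So $B_t$ is a balanced separator of size $\tw(H)+1$, and $\sn(G)\leq\tw(G)+1=\Oh(\tw(G))$ whenever $G$ has an edge.

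\textbf{Lower bound $\tw(G)=\Oh(\sn(G))$.} This is Dvořák--Norin. I would not reprove it but cite it, since the paper itself attributes the converse to~\cite{Dvorak19}. If I had to sketch it, I would use the contrapositive via brambles or well-linked sets. Large treewidth gives a set $W$ that is highly well-linked, meaning no small separator splits it evenly. Choose weights concentrated on $W$ and pass to a suitable induced subgraph in which every balanced separator must hit many disjoint connecting paths. The actual difficulty is obtaining a linear rather than logarithmic loss. The easy argument using the uniform measure on a well-linked set only gives $\tw=\Oh(\sn\log n)$. Dvořák and Norin remove the logarithm by a careful iterative construction that converts separations balanced in induced subgraphs into separations balanced with respect to arbitrary weights. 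I expect this step to be the main obstacle, and I would simply invoke their theorem.

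Finally, the hypothesis that $G$ has at least one edge only ensures that both quantities are at least~$1$. This rules out the degenerate case $\tw=0$ with $\sn$ possibly $1$ (or $0$), so the constants in $\Theta$ can be chosen uniformly.
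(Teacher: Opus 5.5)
Your proposal is correct and matches the paper, which gives no proof of its own and simply attributes $\sn(G)=\Oh(\tw(G))$ to Robertson and Seymour and the converse to Dvořák and Norin. Your orientation argument for the easy direction is the standard one and is sound, and it correctly gives $\sn(G)\leq\tw(G)+1\leq 2\tw(G)$ once $G$ has an edge. Citing Dvořák--Norin for $\tw(G)=\Oh(\sn(G))$ is exactly what the paper does, and your remark that the edge hypothesis only excludes the degenerate edgeless case is accurate.
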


Let $\mathcal C$ be a~class satisfying the assumption of~\cref{thm:main} for some exponent $0\leq b<1$.
Choose an integer $a \geq 2$ such that $b \leq 1-1/a$.
The preceding theorem therefore gives
\begin{equation}\label{eq:treewidth}
  \tw(G)=\Oh_{\mathcal C}(|E(G)|^{1-1/a})
  \qquad\text{for every }G\in\mathcal C.
\end{equation}
For edgeless~$G$, the bound is immediate.
We fix~$\mathcal C$ and~$a$ for the rest of the proof.

\section{Light balanced separators}\label{sec:separators}

For a~graph $G$, a~nonnegative vector $x$ indexed by~$V(G)$, and $0<\delta<1$, a~balanced separator~$S$ of~$G$ is \emph{$(\delta,x)$-light} if $x(S)\leq\delta x(V(G))$.
Note that the balance is measured by the number of vertices in each component, and lightness by the cost~$x$.
We set
\[
  e_G(x):=\sum_{uv\in E(G)}x_ux_v.
\]
The goal of this section is to find a~$(\delta,x)$-light balanced separator when $x(V(G))$ is sufficiently large and $e_G(x)$ is relatively small.

\begin{lemma}\label{lem:cheap-separator}
  Let~$\mathcal C$ be a~graph class with strongly edge-sublinear separators, and let~$a \geq 2$ be an integer satisfying~\eqref{eq:treewidth}.
  For every integer $n \geq 2$ and rational $0<\delta<1$, there is an integer
  \begin{equation}\label{eq:separator-parameters}
    p=\Oh_{\mathcal C}\bigl((\delta^{-1}\log n)^{4a-2}\bigr)
  \end{equation}
  such that for any $n$-vertex graph $G\in\mathcal C$ and rational vector $x\in[0,1]^{V(G)}$ with
  \[k:=x(V(G))\geq p~~\text{and}~~e_G(x)\leq k^2/p,\]
  one can find a~$(\delta,x)$-light balanced separator of~$G$.
  The algorithm is deterministic and takes polynomial time in the input encoding length.
\end{lemma}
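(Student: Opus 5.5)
We plan to follow the outline: run a vertex-capacitated Leighton--Rao style procedure with capacities~$x$, and argue that failure yields a graph in~$\mathcal C$ contradicting~\eqref{eq:treewidth}. Concretely, we set up the uniform multicommodity flow (or the balanced-separator LP) on~$G$ where vertex~$v$ has capacity~$x_v$ and every pair of vertices demands one unit. We solve the LP and apply the vertex-capacitated flow--cut gap of $\Oh(\log n)$ with a region-growing rounding, which is deterministic and polynomial. Either this produces a balanced separator~$S$ (balance measured in vertex counts, via the standard iterative ``cut off a piece and recurse on the largest part'' procedure) with $x(S)\leq\delta k$, and we are done. Or it certifies a fractional flow in which every pair $u,w$ routes $f$ units, with $f n^2\gtrsim \delta k/\log n$, along paths, and the total flow through every vertex~$v$ is at most~$x_v$.

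In the second case we follow Korhonen--Lokshtanov. We take a subcubic expander~$H$ on $t$ vertices (say $t\approx(\delta^{-1}\log n)^{c}$, to be tuned) with treewidth $\Omega(t)$. We map each vertex of~$H$ to a random vertex of~$G$ and each edge of~$H$ to a random flow path between its endpoints' images, sampled proportionally to flow. Congestion bounds give, for each vertex~$v$, a probability $\Oh(x_v\cdot \text{poly}(\log n/\delta)/k)$ that a given path uses~$v$. The expected path length is $\sum_v x_v/(fn^2)\approx k\log n/(\delta k)\cdot n^2/n^2$, which is polylogarithmic after normalization. Let $W$ be the union of the sampled paths, and $G[W]\in\mathcal C$. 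We bound $\E|E(G[W])|$. Two vertices $u,v$ are both in $W$ with probability about $t^2 x_ux_v\,\text{poly}/k^2$. Summing over edges and using $e_G(x)\leq k^2/p$ gives $\Oh(t^2\text{poly}/p)$ edges from cross-path adjacencies, plus about $t\cdot\text{polylog}$ edges inside single paths (paths can be shortcut to induced paths). Similarly, the expected number of intersecting pairs of paths is $t^2\sum_v x_v^2\text{poly}/k^2\leq t^2\text{poly}/k\leq t^2\text{poly}/p$. Discarding the few bad paths (deleting their edges in~$H$) keeps a subgraph of~$H$ with treewidth $\Omega(t)$. The paths for distinct edges are then essentially disjoint and nonadjacent except at endpoint branch vertices. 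So contracting paths shows $G[W]$ has an $\Omega(t)$ treewidth minor-like structure, whence $\tw(G[W])=\Omega(t)$ (treewidth is minor-monotone, and contraction of the induced paths gives a subgraph of $G[W]$'s minor). With $|E(G[W])|=\Oh(t\,\text{polylog})$, \eqref{eq:treewidth} gives $t\lesssim (t\,\text{polylog})^{1-1/a}$, so $t\lesssim \text{polylog}^{a-1}$. Choosing $t$ above this and $p$ large enough (yielding the exponent $4a-2$ after bookkeeping of $\delta^{-1}\log n$ factors) gives a contradiction. Derandomization is unnecessary: the contradiction only shows the second case cannot occur, so the algorithm is just the deterministic LP rounding.

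The main obstacle is the probabilistic embedding: controlling simultaneously path lengths, pairwise intersections, and adjacencies between different paths. It also requires ensuring the surviving structure genuinely forces large treewidth in $G[W]$, since adjacent or intersecting paths could collapse the grid-like structure. I would first try handling endpoints by sampling branch vertices proportional to~$x$ (not uniformly), and treat any edge between paths of non-incident $H$-edges as a defect removed by deleting those $H$-edges. This uses the expected-defect bound $t^2\text{poly}/p\ll t$.
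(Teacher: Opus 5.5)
Your plan follows the paper's route. You apply vertex-capacitated Leighton--Rao iteratively to get either a light balanced separator or a uniform all-pairs flow on a large induced subgraph with $\Pr[v\in V(Q)]\le\ell x_v/k$. You then take a random almost-embedding of a subcubic expander $H$ and derive a contradiction with \eqref{eq:treewidth}. Deleting one $H$-edge per defective pair, using only that the expected number of defects is $\Oh(t^2\ell^2/p)\ll t$, is a sound variant of the paper's requirement that there be no defect at all (it needs only $p\gg t\ell^2$).

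\textbf{The gap} is the bound $|E(G[W])|=\Oh(t\,\mathrm{polylog})$. You control edges inside one path, and, by independence, edges between paths of \emph{nonincident} edges of $H$. Your estimate $\Pr[u,v\in W]\approx t^2x_ux_v\ell^2/k^2$ is valid only for such pairs. Edges between $\pi(ab)$ and $\pi(ac)$ are never bounded. These paths share the random endpoint $\phi(a)$, and conditional on $\phi(a)=s$ both are drawn from the flow leaving~$s$. The congestion bound gives only $\E[|V(\pi(e))|]\le\ell$; it says nothing about the correlation through $s$ or about higher moments of the lengths. Such paths may overlap and be adjacent along their whole length, so ``essentially disjoint and nonadjacent except at branch vertices'' is false for them.

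The missing step, which is precisely what the paper adds to Korhonen--Lokshtanov, is a deterministic length cap. By Markov, with probability at least $3/4$ the total length is at most $4\ell|E(H)|$. Choose $c$ with $8|E(H)|\le c\,\tw(H)$, which is possible since $\tw(H)=\Omega(t)$. Then at most $\tw(H)/2$ paths have more than $c\ell$ vertices, and you delete those $H$-edges as well. Now each retained path, and each of the $\Oh(t)$ pairs of retained paths for incident edges, spans $\Oh(\ell^2)$ edges. This gives $|E(F)|=\Oh(t\ell^2)$.

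There are three further corrections.
\begin{enumerate}
\item Since incident paths can overlap, $\tw(F)=\Omega(t)$ does not follow by contracting paths and invoking minor-monotonicity, because the branch sets are not disjoint. Instead, every vertex of $F$ lies only on paths of edges meeting one fixed edge of $H$. Replacing it in each bag by the at most $6$ endpoints of those edges turns a tree-decomposition of $F$ into one of $H'$.
\item Keep branch vertices uniform rather than proportional to $x$. The flow has uniform demands, which is what gives balance in vertex count, and \eqref{eq:path-probability} averages over uniformly chosen endpoint pairs. Reweighting pairs by $x_sx_t/k^2$ instead of $1/|V(G')|^2$ can lose a factor up to $n^2/k^2$.
\item For the same reason, loads must include flow starting or ending at $v$. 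Otherwise every vertex is an endpoint with probability about $2/|V(G')|$ regardless of $x_v$, and the adjacency estimate breaks down.
\end{enumerate}
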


We use a~vertex-capacitated version of the Leighton--Rao multicommodity flow--cut theorem~\cite{Leighton99}, with capacities proportional to the costs~$x_v$.
Repeated applications yield either the required balanced separator or an all-pairs flow in an induced subgraph.
In the latter case, we sample paths from this flow to represent the edges of a~subcubic graph of large treewidth.
The capacity constraints and the assumptions on~$x$ control the sizes of these paths and bound the probability that paths representing nonincident edges intersect or are joined by an edge.
With positive probability, discarding long paths leaves a~collection whose vertices induce a~graph with large treewidth but too few edges to satisfy~\eqref{eq:treewidth}.
Heredity ensures that this graph belongs to~$\mathcal C$, giving a~contradiction.

The construction of an \emph{induced almost-embedding} (see~\cref{subsec:paths} for definitions) from a~concurrent flow follows Korhonen and Lokshtanov~\cite[Section~4.2]{Korhonen24}.
Earlier separator arguments using multicommodity flows include those of Matou\v{s}ek~\cite{Matousek14} and Biswal, Lee, and Rao~\cite{BiswalLeeRao2010}.
In particular, the latter already separates long and short flow paths in the treatment of shallow excluded minors: many long paths force large congestion, while restricting to short paths permits a~shallow-minor argument.
Here we combine the vertex-dependent bounds in~\eqref{eq:path-probability}, the assumptions on~$x$, and a~subcubic graph of linear treewidth.
These ingredients let us discard edges assigned to long paths while preserving linear treewidth.

\subsection{Flows and separations}\label{subsec:flows}

A~\emph{separation} of~$G$ is a~pair $(A,B)$ of nonempty vertex sets with $A\cup B=V(G)$ and with no edge between $A\setminus B$ and $B\setminus A$.
Its separator is $A\cap B$.

We use nonnegative vertex costs~$x_v$, so the cost of a~separator~$S$ is~$x(S) = \sum_{v \in S} x_v$.
To find a~separator of small cost, we apply a~multicommodity flow--cut theorem with vertex capacities proportional to these costs.

\subparagraph*{Flows and capacity constraints.}

For $u,v\in V(G)$, let~$\mathcal P_{u,v}(G)$ be the set of simple paths from~$u$ to~$v$, regarded as ordered vertex sequences.
In particular, $\mathcal P_{v,v}(G):=\{(v)\}$.
A~\emph{unit all-pairs multicommodity flow} assigns a~nonnegative number~$f(P)$ to each such path, satisfying
\[
  \sum_{P\in\mathcal P_{u,v}(G)}f(P)=1
  \qquad\text{for every }(u,v)\in V(G)^2.
\]
Its \emph{load} at~$w\in V(G)$ is
\[
  \operatorname{load}_f(w):=
    \sum_{u,v\in V(G)}\ \sum_{\substack{P\in\mathcal P_{u,v}(G)\\w\in V(P)}}f(P).
\]

For a~given~$\gamma>0$, the vertex-capacity constraints used below are
\[
  \operatorname{load}_f(v)\leq\gamma x_v
  \qquad(v\in V(G)).
\]

\subparagraph*{The flow--cut theorem.}
The following is a~consequence of the directed Leighton--Rao flow--cut theorem~\cite[Theorem~12]{Leighton99} and the vertex-splitting reduction~\cite[Section~3.5]{Leighton99}.
We adapt the proof to our convention that vertex loads include endpoints and that every ordered pair, including~$(v,v)$, has unit demand.

\begin{lemma}\label{lem:vertex-flow}
  Given an $n$-vertex graph~$G$ with $n\geq2$, nonnegative rational vertex costs~$x_v$, and a~rational $\gamma > 0$, there is a~deterministic algorithm, polynomial in the encoding of $(G, x, \gamma)$, that finds
  \begin{enumerate}
    \item a~unit all-pairs multicommodity flow~$f$ satisfying $\operatorname{load}_f(v)\leq\gamma x_v$ for every $v\in V(G)$, or
    \item a~separation $(A,B)$ such that
      \begin{equation}\label{eq:vertex-cut}
        x(A\cap B)=\Oh\left(\frac{|A||B|}{\gamma}\log n\right).
      \end{equation}
  \end{enumerate}
\end{lemma}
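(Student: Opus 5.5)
We reduce to the directed, edge-capacitated Leighton--Rao theorem by the standard vertex-splitting construction. First dispose of zero-cost vertices: give each vertex $v$ a capacity $c_v:=\gamma x_v$. Build a digraph $D$ with, for each $v\in V(G)$, two nodes $v^-,v^+$ and an internal arc $v^-\to v^+$ of capacity $c_v$. For each edge $uv\in E(G)$ add arcs $u^+\to v^-$ and $v^+\to u^-$ of capacity $+\infty$ (or a sufficiently large number, say $n^2+\sum_v c_v$). For every ordered pair $(u,v)$ with $u\neq v$, put a unit demand from $u^-$ to $v^+$; this way the flow passes through the internal arcs of both endpoints, matching our convention that loads include endpoints. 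The diagonal pairs $(v,v)$ are treated separately: routing the trivial path $(v)$ costs exactly one unit of load at $v$, so we add a dummy commodity or simply reduce the capacity of $v^-\to v^+$ by $1$. Equivalently, we first test whether $\gamma x_v\geq 1$ for all $v$; if some $v$ has $\gamma x_v<1$, we output the separation $(\{v\},V(G))$, whose separator $\{v\}$ costs $x_v<1/\gamma\leq |A||B|/\gamma$.

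Next we solve the uniform-demand concurrent flow LP on $D$ in polynomial time (it has polynomially many variables when written as an arc-flow LP per commodity, so path decomposition then gives $f$ with polynomially many paths; simple paths are obtained by shortcutting cycles, which only decreases loads). If the maximum concurrent throughput $\lambda$ is at least $1$, scaling yields flow~$f$ as in item~1. Otherwise, the directed Leighton--Rao theorem (Theorem~12 of~\cite{Leighton99}, for symmetric product demands) furnishes, constructively and in polynomial time via the dual LP rounding, an arc set $F$ with capacity $c(F)=\Oh(\lambda\log n)\cdot d(F)$. Here $d(F)$ is the total demand disconnected, and since $\lambda<1$ this is $\Oh(d(F)\log n)$. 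Infinite-capacity arcs can never be in $F$ (otherwise the bound fails, given our choice of the large constant), so $F$ consists of internal arcs, i.e., a vertex set $S$ with $c(S)=\gamma x(S)$.

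The main obstacle is converting the directed cut into a separation with the product bound $|A||B|$. Since demands are symmetric, a standard argument (as in \cite[Section~3.5]{Leighton99}) lets us take the sparsest cut to be a \emph{balanced-type} ratio cut. Removing $S$ from $G$, let the components of $G-S$ be grouped: a pair $(u,v)$ is disconnected only if $u,v$ lie in different components or one lies in $S$. We choose a grouping of components into a side $A'$ and let $A:=A'\cup S$, $B:=V(G)\setminus A'$. The disconnected demand is at most $2(\sum_{i<j}|C_i||C_j|+n|S|)$. Grouping the components greedily into two sides gives $|A||B|\geq \tfrac13$ of this quantity up to constants; if one component has more than half the vertices, place it alone opposite the rest, and otherwise split evenly. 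The terms involving $S$ are covered since $S\subseteq A\cap B$ makes $|A|,|B|\geq |S|$. Consequently $\gamma x(A\cap B)=\gamma x(S)=\Oh(|A||B|\log n)$, which is~\eqref{eq:vertex-cut}. Nonemptiness of $A,B$ holds unless $S=V(G)$, a case which is itself a valid separation.
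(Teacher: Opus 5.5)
\textbf{The flow--cut step needs a different argument or citation.} Your overall plan (vertex splitting, a Leighton--Rao flow--cut dichotomy, then reading off a separation) is the right one. Your cut-to-separation step, regrouping the components of $G-S$, is sound and is a reasonable alternative to the paper's direct translation.

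The gap is the step that carries the whole lemma. You put unit demand only on the pairs $(u^-,v^+)$, $u\neq v$, of the split digraph, and then invoke Theorem~12 of Leighton--Rao. That theorem is about demands spread uniformly over all ordered pairs of nodes of the digraph, and any product or symmetric variant has $D(a,b)=D(b,a)$. Your demand pattern is bipartite and asymmetric: there is demand from $u^-$ to $v^+$, none from $v^+$ to $u^-$, and none among minus nodes or among plus nodes. So the bound $c(F)=\Oh(\lambda\log n)\,d(F)$ does not follow from the result you cite. The corresponding statement is true, since your problem is uniform vertex-capacitated flow in the undirected graph $G$. But it needs its own region-growing argument with vertex lengths, or a citation that actually covers it.

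\textbf{How the paper makes the uniform theorem apply.} It adds reverse arcs $v^+\to v^-$ with the same huge capacity as the edge arcs, and puts demand $1/\gamma$ on every ordered pair of distinct split nodes.
\begin{itemize}
\item In the flow case, it keeps only the $u^-$-to-$v^+$ commodities; dropping the others only lowers loads. A simple path cannot use a reverse arc, because the only arc entering $w^+$ is $w^-\to w^+$.
\item For $u=v$, the commodity $(v^-,v^+)$ is forced onto the single arc $v^-\to v^+$. This handles the diagonal pairs without your preprocessing.
\item In the cut case, the huge reverse and edge arcs force $A:=\{v:v^-\in U\}$ and $B:=\{v:v^+\in\overline U\}$ to form a separation. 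Its separator cost equals the cut capacity, and $|U|\le2|A|$, $|\overline U|\le2|B|$ convert the uniform bound in $|U||\overline U|$ into one in $|A||B|$.
\end{itemize}

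\textbf{Two smaller points.}
\begin{itemize}
\item The explicit capacity $n^2+\sum_v c_v$ is too small to exclude big arcs from the cut. The cut bound is only $\Oh(n^2\log n)$ with an unspecified constant, so one big arc can fit under it. You need a sufficiently large multiple of $n^2\lceil\log n\rceil$, or genuinely infinite capacity.
\item After reducing the internal capacities by $1$, the cut capacity is $\gamma x(S)-|S|$, not $\gamma x(S)$. This is harmless, because $|S|\le|A||B|$.
\end{itemize}
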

\begin{proof}
  Split each vertex~$v$ into~$v^-,v^+$ and give the arc $v^-\to v^+$ capacity~$x_v$.
  Each edge~$uv$ gives arcs $u^+\to v^-$ and $v^+\to u^-$.
  These arcs and all reverse arcs $v^+\to v^-$ receive capacity~$c$, a~sufficiently large integer multiple of~$\gamma^{-1}n^2\lceil\log n\rceil$.
  This finite value acts as infinite capacity: it exceeds both the total demand and the cut-capacity bound below.
  Run the directed flow--cut algorithm of Leighton and Rao~\cite[proof of Theorem~12]{Leighton99} on this network~$\vec G$, with demand~$1/\gamma$ for every ordered pair of distinct split vertices.

  If all demands are routed, retain the $u^-$-to-$v^+$ flows for all $(u,v)\in V(G)^2$ and decompose them into simple directed paths; the split endpoints are distinct even when $u=v$.
  Such a~path cannot use a~reverse arc $w^+\to w^-$: it reaches~$w^+$ through~$w^-$, so using the reverse arc would repeat a~vertex.
  Every retained path consists of consecutive pairs $(w^-,w^+)$, each traversing the capacity arc \mbox{$w^-\to w^+$}, joined by arcs corresponding to edges of~$G$.
  Replacing each pair $(w^-,w^+)$ by~$w$ gives a~simple path in~$G$ with the same original endpoints; when $u=v$, this is the singleton path~$(u)$.
  Multiply the flow amount on each path by~$\gamma$.
  This yields our unit all-pairs flow with $\operatorname{load}_f(w) \leq \gamma x_w$.

  Otherwise, we obtain a~nontrivial cut~$(U,\overline U)$ of capacity $\Oh(\gamma^{-1} |U||\overline U|\log n)<c$.
  Put $A:=\{v:v^-\in U\}$ and $B:=\{v:v^+\in\overline U\}$.
  No arc of capacity~$c$ leaves~$U$, so the reverse arcs and edge arcs ensure that~$(A,B)$ is a~separation.
  Its cost~$x(A\cap B)$ equals the cut capacity; the bounds $|U| \leq 2|A|$ and $|\overline U| \leq 2|B|$ then give~\eqref{eq:vertex-cut}.
  The algorithm in~\cite[proof of Theorem~12]{Leighton99} solves the multicommodity flow linear program and, if the demands cannot be routed, constructs a~cut deterministically from an optimal dual solution.
  The network has polynomial size and rational capacities of polynomial encoding length, so these steps and projecting the flow onto~$G$ take polynomial time in the input encoding length.
\end{proof}

\subparagraph*{From separations to a light balanced separator.}
Given an $n$-vertex graph~$G$ with $n\geq2$ and a~nonnegative and nonzero (every entry is nonnegative and not all entries are 0) rational vector~$x$, we will always set $k:=x(V(G))>0$.
For any rational $0<\delta<1$, integer $\ell$ is picked to be at least a~sufficiently large universal constant times~$\delta^{-1}\log n$, and we set
\begin{equation}\label{eq:flow-parameters}
  \gamma:=\frac{\ell n^2}{3k}.
\end{equation}

We repeatedly apply~\cref{lem:vertex-flow} to show the following lemma (as in~\cite[Lemma~4]{Bonnet26b}).

\begin{samepage}
\begin{lemma}\label{lem:balanced-flow}
  With~$G,x,\delta,\ell,\gamma$ chosen as above, there is a~deterministic algorithm, polynomial in their encoding length, that finds
  \begin{enumerate}
    \item a~$(\delta,x)$-light balanced separator of~$G$, or\label{item:balanced-flow-separator}
    \item an induced subgraph~$G'$\label{item:balanced-flow-paths} with more than $2n/3$ vertices and a~probability distribution on the ordered simple paths of~$G'$.
      If~$Q$ is drawn from this distribution, then
      \[
        \Pr[Q\in\mathcal P_{u,v}(G')]=\frac{1}{|V(G')|^2}
        \qquad\text{for every }(u,v)\in V(G')^2.
      \]
      Moreover,
      \begin{equation}\label{eq:path-probability}
        \Pr[v\in V(Q)]\leq \frac{\ell x_v}{k}
        \qquad\text{for every }v\in V(G').
      \end{equation}
  \end{enumerate}
\end{lemma}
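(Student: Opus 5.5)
We follow the iterative peeling scheme of \cite[Lemma~4]{Bonnet26b}. We maintain a~set $R\subseteq V(G)$ of removed vertices and a~separator set $S\subseteq V(G)$, both initially empty. We also maintain the invariant that every component of $G-S$ lying inside~$R$ has at most $2n/3$ vertices, and that $x(S)$ is bounded by a~sum of charges proportional to the number of vertices removed so far.

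In each round, we let $H:=G-(R\cup S)$. If $|V(H)|\leq 2n/3$, then $S$ is balanced: every component of $G-S$ is either contained in~$V(H)$ or is one of the pieces already peeled into~$R$. Otherwise, we apply \cref{lem:vertex-flow} to~$H$ with the costs~$x$ restricted to~$V(H)$ and with the same~$\gamma$ as in~\eqref{eq:flow-parameters}.

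If \cref{lem:vertex-flow} returns a~flow~$f$, we output $G':=H$, which has more than $2n/3$ vertices. To sample~$Q$, we pick $(u,v)\in V(H)^2$ uniformly and then pick $P\in\mathcal P_{u,v}(H)$ with probability $f(P)$. This gives $\Pr[Q\in\mathcal P_{u,v}]=1/|V(H)|^2$ directly. Moreover,
\[
\Pr[v\in V(Q)]=\operatorname{load}_f(v)/|V(H)|^2\leq \gamma x_v/|V(H)|^2 .
\]
Since $|V(H)|>2n/3$, we have $|V(H)|^2>4n^2/9$, so this is at most $\frac{9}{4}\cdot\frac{\ell n^2}{3k}\cdot\frac{x_v}{n^2}=\frac{3\ell x_v}{4k}\leq \ell x_v/k$. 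This is exactly~\eqref{eq:path-probability}.

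If \cref{lem:vertex-flow} instead returns a~separation $(A,B)$ of~$H$, assume $|A\setminus B|\leq|B\setminus A|$ after swapping if necessary. We add $A\cap B$ to~$S$ and $A\setminus B$ to~$R$. The cost added satisfies
\[
x(A\cap B)=\Oh\bigl(|A||B|\gamma^{-1}\log n\bigr)\leq \Oh\bigl(|A|\, n\,\gamma^{-1}\log n\bigr)=\Oh\Bigl(|A|\frac{k\log n}{\ell n}\Bigr).
\]

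\textbf{Main obstacle.} The delicate point is charging: $|A|$ includes $A\cap B$, which is not peeled into~$R$. We first try the following. If the smaller side is $A\setminus B$ with $|A\setminus B|<|A\cap B|$, we instead charge against $|A\cap B|$ as well, since those vertices also leave~$H$ (they move to~$S$). Each round then removes at least $\max(1,|A|/2)$ vertices of~$H$. Hence the total cost is
\[
x(S)=\Oh\Bigl(\frac{k\log n}{\ell n}\Bigr)\cdot\sum_{\text{rounds}}|A|=\Oh\Bigl(\frac{k\log n}{\ell}\Bigr)\leq\delta k,
\]
by choosing the universal constant in $\ell\geq C\delta^{-1}\log n$ large enough.

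For balance, note that $|A\setminus B|\leq |V(H)|/2\leq n/2$. Each component peeled into~$R$ is therefore a~union of components of $H-(A\cap B)$ with at most $n/2<2n/3$ vertices. Later rounds only add separator vertices elsewhere, so those components stay separated.

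The number of rounds is at most~$n$, since each round deletes at least one vertex (the separation has nonempty sides and we may assume $A\neq V(H)$). Every step is a~call to \cref{lem:vertex-flow} on polynomially encoded data, and sampling is given explicitly by the flow's path decomposition. The whole procedure is therefore deterministic and runs in polynomial time.
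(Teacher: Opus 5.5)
Your proposal is correct and follows the paper's proof. You peel off the smaller side of each Leighton--Rao separation (your $A$ is the paper's $B_i$), bound each cut by $\Oh(n|A|\gamma^{-1}\log n)$ and sum over the disjoint removed sets, and sample $Q$ from the returned flow using $|V(H)|>2n/3$. The ``main obstacle'' you flag is not one: each round removes all of $A$ from $H$ ($A\setminus B$ into $R$ and $A\cap B$ into $S$), so $\sum|A|\leq n$ directly, which is exactly how the paper concludes.
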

\end{samepage}
\begin{proof}
  Start with $G':=G$ and $S:=\varnothing$.
  While $|V(G')|>2n/3$, apply~\cref{lem:vertex-flow} to~$G'$, using the restriction of~$x$ and the fixed parameter~$\gamma$.
  Stop if a~flow is returned.
  Otherwise, swap the sides of the returned separation $(A_i,B_i)$ if necessary so that $|A_i|\geq|B_i|$.
  Add $A_i\cap B_i$ to~$S$ and replace~$G'$ by $G'[A_i\setminus B_i]$.
  Each such step removes the nonempty set~$B_i$, so these sets are pairwise disjoint and there are at most~$n$ calls to~\cref{lem:vertex-flow}.

  By~\eqref{eq:vertex-cut}, the total separator cost is
  \begin{equation}\label{eq:separator-cost}
    x(S)=\Oh\left(\frac{\log n}{\gamma}\sum_i|A_i||B_i|\right)
      =\Oh\left(\frac{n^2\log n}{\gamma}\right)
      =\Oh\left(\frac{3 k \log n}{\ell}\right) \leq \delta k.
  \end{equation}
  Here $|A_i|\leq n$ and $\sum_i|B_i|\leq n$ give the second bound, and the choice of~$\ell$ gives the last inequality.
  If every call returns a~separation, the discarded sets $B_i\setminus A_i$ and the final set~$V(G')$ partition $V(G)\setminus S$, with no edges between distinct parts.
  Since $|A_i\setminus B_i|\geq|B_i\setminus A_i|$, each discarded set has at most~$n/2$ vertices, and the final set has at most~$2n/3$ vertices.
  Thus~$S$ is a~$(\delta,x)$-light balanced separator.

  If a~unit all-pairs flow~$f$ is returned on~$G'$, define~$Q$ by the following sampling rule.
  Choose~$u,v$ independently and uniformly from~$V(G')$.
  Conditional on this choice, select a~path $P\in\mathcal P_{u,v}(G')$ with probability~$f(P)$, and set $Q:=P$.
  The unit-flow equations ensure that these conditional probabilities sum to~1.
  Consequently, every ordered path~$P$ in~$G'$ satisfies
  \[
    \Pr[Q=P]=\frac{f(P)}{|V(G')|^2}.
  \]
  Thus, for every vertex~$w\in V(G')$,
  \[
    \Pr[w\in V(Q)]
      =\frac{\operatorname{load}_f(w)}{|V(G')|^2}
      \leq\frac{\gamma x_w}{|V(G')|^2}
      \leq\frac{\ell x_w}{k},
  \]
  where the last inequality uses $|V(G')|>2n/3$ and $(3/2)^2 < 3$.
  This gives item~\ref{item:balanced-flow-paths}.
  Its distribution is specified deterministically by the rational probabilities~$f(P)/|V(G')|^2$ on the polynomially many paths carrying a~positive flow.
  The running time follows from~\cref{lem:vertex-flow} and the bound on the number of iterations.
\end{proof}

\subsection{Treewidth and sampled paths}\label{subsec:paths}

We rule out item~\ref{item:balanced-flow-paths} of~\cref{lem:balanced-flow} by sampling paths for the edges of a~$t$-vertex subcubic graph of treewidth~$\Omega(t)$, following Korhonen and Lokshtanov~\cite[Section~4.2]{Korhonen24}.
Using~\eqref{eq:path-probability}, we bound the probability that paths assigned to nonincident edges intersect or are joined by an edge.
Summing~\eqref{eq:path-probability} over vertices also bounds the expected number of vertices on each sampled path.
After deleting the edges assigned to long paths, the remaining path vertices induce a~graph~$F$ with treewidth~$\Omega(t)$ and~$\Oh(t\ell^2)$ edges.
For a~suitable choice of~$t$, these bounds contradict~\eqref{eq:treewidth}.

An~\emph{almost-embedding} of~$H$ in~$G$ is a~pair~$(\phi,\pi)$, where $\phi:V(H)\to V(G)$ and~$\pi$ assigns to each edge~$uv$ of~$H$ a~path from~$\phi(u)$ to~$\phi(v)$ in~$G$, such that paths assigned to \emph{nonincident} edges (i.e., with no common endpoint) are vertex-disjoint.
The almost-embedding is \emph{induced} if there is also no edge of~$G$ between paths assigned to nonincident edges.
The map $\phi$ need not be injective, and singleton paths are allowed images of~$\pi$.

The following observation transfers a~tree-decomposition through an almost-embedding.

\begin{lemma}\label{lem:path-treewidth}
  Let~$H$ have maximum degree at most~$d$ and no isolated vertices.
  If~$H$ has an almost-embedding in~$G$, then every tree-decomposition of~$G$ with bags of size at most~$w$ gives a~tree-decomposition of~$H$ with bags of size at most~$2dw$.
\end{lemma}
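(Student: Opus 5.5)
Given a tree-decomposition $(T,\{B_t\})$ of~$G$ with bags of size at most~$w$, I will define a decomposition of~$H$ on the same tree~$T$ by setting
\[
  B'_t:=\{h\in V(H): \text{some edge } e\ni h \text{ of } H \text{ has } V(\pi(e))\cap B_t\neq\varnothing\}.
\]
That is, each vertex~$h$ of~$H$ is represented by the connected subgraph $X_h:=\bigcup_{e\ni h}\pi(e)$ of~$G$. This subgraph is connected: every path~$\pi(e)$ with $e\ni h$ contains~$\phi(h)$. It is nonempty because~$h$ is not isolated.

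The first step is to check the tree-decomposition axioms. Since $X_h$ is connected in~$G$, the set of nodes whose bags meet~$X_h$ induces a connected subtree of~$T$. This is the standard fact that the union of subtrees indexed by the vertices of a connected subgraph, pairwise linked through its edges, is a subtree. It is also nonempty, so the first condition holds. For an edge $hh'$ of~$H$, both $X_h$ and $X_{h'}$ contain the path $\pi(hh')$. Any bag containing a vertex of that path therefore contains both~$h$ and~$h'$, which gives the second condition.

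The second step is the bag-size bound, which is the only place where the almost-embedding property is used. Fix a node~$t$ and a vertex $g\in B_t$. I bound the number of edges $e\in E(H)$ with $g\in V(\pi(e))$. Paths of nonincident edges are disjoint, so these edges pairwise share an endpoint. A pairwise-intersecting family of edges in a simple graph is either a star or a triangle, so there are at most~$\max(d,3)$ of them. This gives at most~$2\max(d,3)$ vertices of~$H$ charged to~$g$. Summing over $g\in B_t$ gives $|B'_t|\leq 2\max(d,3)\,w$.

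The main (minor) obstacle is the triangle case when $d<3$, which prevents the clean $2dw$ bound. I would handle it by a finer count: in the triangle case the three edges span only $3$ vertices of~$H$. So the number of $H$-vertices charged to~$g$ is at most $\max(d+1,3)\leq 2d$ whenever $d\geq 2$. If $d\leq1$, triangles cannot occur and the count is at most $2d$ directly. In all cases $|B'_t|\leq 2dw$, as claimed.
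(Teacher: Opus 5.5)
Your proof is correct and follows the paper's proof: you use the same bags $B'_t$, the same connectivity argument via $\phi(h)$, and the almost-embedding property only to bound the number of $H$-vertices charged to each $g\in B_t$. The one difference is in that count: the paper fixes a single edge $xy$ with $g\in V(\pi(xy))$ and notes that every other such edge meets $\{x,y\}$, so all charged vertices lie in $N_H(x)\cup N_H(y)$, which gives $2d$ immediately without your star/triangle case analysis (your count gives the slightly sharper $\max(d+1,3)$).
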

\begin{proof}
  Let~$(\phi,\pi)$ be an almost-embedding of~$H$ in~$G$.
  For each $v \in V(G)$, let~$L_v$ consist of the endpoints of the edges $e \in E(H)$ such that $v \in V(\pi(e))$.
  We first claim that $|L_v| \leq 2d$.
  Indeed, if some path~$\pi(xy)$ contains~$v$, every edge whose path contains~$v$ shares an endpoint with~$xy$, since paths assigned to nonincident edges are vertex-disjoint.
  All endpoints of these edges belong to $N_H(x) \cup N_H(y)$.
  This union has size at most~$2d$.

  Take a~tree-decomposition $(T,(B_t)_{t\in V(T)})$ of~$G$ and replace each bag $B_t$ by
  \[
    B'_t:=\bigcup_{v \in B_t} L_v.
  \]
  Every vertex of~$H$ belongs to a~new bag, since it has an incident edge.
  Both endpoints of an edge~$xy \in E(H)$ belong to every new bag whose original bag intersects~$\pi(xy)$.
  For a~vertex~$x$ of~$H$, the union of the paths representing edges incident with~$x$ is connected, since all of them contain~$\phi(x)$.
  In a~tree-decomposition, the bags meeting a~connected vertex set form a~connected subtree: the subtrees for adjacent vertices intersect in a~bag containing their edge.
  Thus the bags~$B'_t$ containing~$x$ form a~connected subtree.
  They give a~tree-decomposition of~$H$, and $|B'_t|\leq2d|B_t|$ proves the bound.
\end{proof}

To apply this observation, we use the following existence result.

\begin{lemma}[{\cite{Hoory06}, \cite[Proposition~1]{Grohe09}}]\label{lem:expanders}
  For every sufficiently large integer~$t$, there is a~$t$-vertex subcubic graph of treewidth~$\Omega(t)$.
\end{lemma}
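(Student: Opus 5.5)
The plan is to take a $3$-regular expander and turn its edge expansion into a lower bound on the size of every balanced separator. \Cref{thm:treewidth} then converts this into a lower bound on treewidth; only its easy direction is needed, $\sn(G)=\Oh(\tw(G))$, due to Robertson and Seymour.

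\textbf{Step 1 (expanders).} For every even $t$ that is large enough, I would invoke the existence of $3$-regular graphs on $t$ vertices with edge expansion bounded below by an absolute constant $h>0$; this is the content of~\cite{Hoory06}. Edge expansion means that $|E(X,V\setminus X)|\geq h|X|$ for every $X$ with $|X|\leq t/2$. The cleanest justification is probabilistic: a uniformly random $3$-regular graph in the configuration model has this property with probability $1-o(1)$. This follows from a first-moment count over sets $X$ with too few outgoing edges, or alternatively from Friedman's spectral gap theorem combined with the Cheeger inequality. If $t$ is odd, I would take such a graph on $t-1$ vertices and add an isolated vertex. The result is still subcubic, and its treewidth is unchanged.

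\textbf{Step 2 (separators are large).} Let $S$ be a balanced separator of this graph $G$, so every component of $G-S$ has at most $2t/3$ vertices. Assume $|S|\leq t/100$, since otherwise we are already done. I would group the components greedily into two sides. Add components to a side $X$ in any order until $|X|\geq (t-|S|)/3$. Since each component has at most $2t/3$ vertices, we can stop with both $X$ and its complement within $V\setminus S$ of size $\Omega(t)$. Let $X'$ be the smaller of these two sets. Then $|X'|\leq t/2$ and $|X'|=\Omega(t)$.

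Every edge leaving $X'$ ends in $S$, because components of $G-S$ are not adjacent to each other. Vertices of $S$ have degree at most~$3$, so
\[
  h|X'|\leq|E(X',V\setminus X')|\leq3|S|.
\]
This gives $|S|=\Omega(t)$. The argument applies to every balanced separator of $G$ itself, so $\sn(G)=\Omega(t)$.

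\textbf{Step 3 (treewidth).} By \cref{thm:treewidth}, this gives $\tw(G)=\Omega(\sn(G))=\Omega(t)$. Alternatively, one can argue directly: a tree-decomposition of width $w$ yields a balanced separator consisting of a single bag, of size at most $w+1$.

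I expect the main obstacle to be Step~1. Specifically, it is the existence of constant-expansion cubic graphs for every even $t$ rather than only along a sparse sequence of orders, which explicit constructions such as Margulis-type graphs provide. I would rely on the random regular graph argument for this, as cited, because it works for all even $t$. The greedy grouping in Step~2 needs only routine care with the constants.
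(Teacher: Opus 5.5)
Your proposal is correct and follows the paper's route. The paper gives no argument beyond citing \cite{Hoory06} for subcubic expanders and \cite[Proposition~1]{Grohe09} for the fact that expansion forces linear treewidth, and your Steps~2--3 spell out that second step through balanced separators.

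One slip needs fixing: the greedy grouping in Step~2 cannot be done in arbitrary order. If small components totalling $t/3-|S|$ are added first and a component of $2t/3$ vertices comes last, the threshold is only reached at the end, leaving an empty complement. Either add components in nonincreasing order of size, or, when some component has at least $(t-|S|)/3$ vertices, take that component alone as~$X$.
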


Subcubic expanders exist~\cite{Hoory06}, and their expansion implies linear treewidth~\cite[Proposition~1]{Grohe09}.
We can now show the main novel technical lemma of this section. 

\begin{lemma}\label{lem:no-flow}
  For every $\ell\geq1$, there is an integer $p=\Oh_{\mathcal C}(\ell^{4a-2})$ with the following property.
  Let $G\in\mathcal C$ and $x\in[0,1]^{V(G)}$, and put $k:=x(V(G))$.
  If $k\geq p$ and $e_G(x)\leq k^2/p$, then~$G$ does not satisfy item~\ref{item:balanced-flow-paths} of~\cref{lem:balanced-flow}.
\end{lemma}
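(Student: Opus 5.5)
I would argue by contradiction. Suppose item~\ref{item:balanced-flow-paths} of~\cref{lem:balanced-flow} holds for some $G\in\mathcal C$, with $k\geq p$ and $e_G(x)\leq k^2/p$. The plan is to use the path distribution to build an induced subgraph~$F$ of~$G'$, hence $F\in\mathcal C$ by heredity, whose treewidth is linear in a parameter~$t$ but whose edge count is only $\Oh(t\ell^2)$. This contradicts~\eqref{eq:treewidth} once~$p$ is a large enough power of~$\ell$.

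\textbf{Random almost-embedding.} Let $t:=\lfloor p/(C\ell^2)\rfloor$ for a large constant~$C$, and take~$H$ from~\cref{lem:expanders}: a $t$-vertex subcubic graph with $\tw(H)\geq c_0t$.
\begin{itemize}
\item Choose $\phi(u)\in V(G')$ independently and uniformly for each $u\in V(H)$.
\item For each edge $uv\in E(H)$, independently sample $\pi(uv)$ from the distribution of~$Q$ conditioned on $Q\in\mathcal P_{\phi(u),\phi(v)}(G')$.
\end{itemize}
By the uniform endpoint marginals in~\cref{lem:balanced-flow}, each $\pi(e)$ is distributed exactly as~$Q$. Moreover, paths of nonincident edges are independent, because they depend on disjoint sets of independent choices.

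\textbf{Counting bad events.} Fix two nonincident edges $e,e'$. By~\eqref{eq:path-probability} and $x_w\leq1$,
\[
\Pr[\pi(e)\cap\pi(e')\neq\varnothing]\leq\sum_w(\ell x_w/k)^2\leq \ell^2/k\leq\ell^2/p .
\]
The probability that some edge $uw\in E(G')$ joins the two paths is at most
\[
2\sum_{uw}\frac{\ell x_u}{k}\cdot\frac{\ell x_w}{k}=\frac{2\ell^2e_G(x)}{k^2}\leq\frac{2\ell^2}{p}.
\]
There are at most $(3t/2)^2$ pairs of edges, so the expected number of bad pairs is $\Oh(t^2\ell^2/p)=\Oh(t/C)$.

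Next, summing~\eqref{eq:path-probability} gives $\E|V(\pi(e))|\leq\ell$. By Markov's inequality, the expected number of edges whose path has more than $C\ell$ vertices is at most $3t/(2C)$.

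Fix an outcome where both counts are at most about their expectations. Delete one edge from each bad pair and every long edge; this removes $\Oh(t/C)$ edges. Deleting~$s$ edges lowers treewidth by at most~$s$, since it suffices to add one endpoint of each to every bag. So the remaining graph $H'$, with isolated vertices dropped, still has $\tw(H')\geq c_0t/2$ for large~$C$. The restriction of $(\phi,\pi)$ is then an induced almost-embedding of~$H'$ with all paths of length $\Oh(\ell)$.

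\textbf{Bounding $F$.} Let $F:=G'[\bigcup_e V(\pi(e))]$. Then $|V(F)|=\Oh(t\ell)$. Every edge of~$F$ lies inside, or between, paths of equal or incident edges of~$H'$, since the almost-embedding is induced. Each edge has $\Oh(1)$ incident edges and each path has $\Oh(\ell)$ vertices, so $|E(F)|=\Oh(t\ell^2)$.

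By~\cref{lem:path-treewidth} with $d=3$, $\tw(F)=\Omega(\tw(H'))=\Omega(t)$. Combining with~\eqref{eq:treewidth} gives $t=\Oh((t\ell^2)^{1-1/a})$, that is, $t=\Oh(\ell^{2a-2})$. Since $t\approx p/(C\ell^2)$, taking $p$ a sufficiently large constant times $\ell^{4a-2}$ (at least $\ell^{2a}$ suffices here) makes this false, which gives the contradiction.

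\textbf{Main obstacle.} The delicate step is the bad-pair analysis. It needs the independence of paths for nonincident edges, which holds because the endpoints are sampled independently per vertex of~$H$. It also uses $x_w\leq1$, which turns $\sum_w x_w^2$ into at most~$k$. A further check is that deleting edges does not destroy linear treewidth; the edge-deletion bound above handles this.
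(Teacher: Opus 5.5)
Your proof is correct. It follows the same skeleton as the paper's:
\begin{itemize}
\item a random almost-embedding of a subcubic expander~$H$ built from the path distribution;
\item the same intersection and adjacency estimates $\ell^2/k$ and $2\ell^2/p$;
\item Markov's inequality to discard long paths;
\item the final clash between $\tw(F)=\Omega(t)$ and $|E(F)|=\Oh(t\ell^2)$.
\end{itemize}

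The one genuine difference is how you treat bad pairs of nonincident edges. The paper takes a union bound over all $\Oh(t^2)$ such pairs, so that with probability at least $3/4$ there is no bad pair at all. This forces $p=\Theta(t^2\ell^2)$, and with $t=\Theta(\ell^{2a-2})$ it gives $p=\Oh_{\mathcal C}(\ell^{4a-2})$.

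You instead use an alteration argument. With $p\geq Ct\ell^2$, the expected number of bad pairs is only $\Oh(t/C)$. So deleting one edge per bad pair costs no more treewidth than deleting the long edges, and both are absorbed by the same linear slack in $\tw(H)$. This only needs $p=\Theta(t\ell^2)$, hence $p=\Oh_{\mathcal C}(\ell^{2a})$. That exponent is strictly better for $a\geq2$, and it would propagate to the running time: the exponent would become roughly $\varepsilon^{-2a}(\log n)^{4a+3}$ instead of $\varepsilon^{-(4a-2)}(\log n)^{8a-1}$.

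The paper's version is marginally simpler. All of~$H$ is already induced-almost-embedded, so only long edges need removal and no bookkeeping over pairs is required. Your version trades that for the better threshold.

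One point to tighten: replace ``at most about their expectations'' with an explicit Markov step on the total number of deleted edges (bad pairs plus long edges). For instance, this total is at most twice its expectation with probability at least $1/2$. Then choose $C$ large relative to $1/c_0$, so that the $\Oh(t/C)$ deletions leave $\tw(H')\geq c_0t/2$.
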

\begin{proof}
  Suppose that item~\ref{item:balanced-flow-paths} of~\cref{lem:balanced-flow} holds, with induced subgraph~$G'$ and random path~$Q$.
  Choose an integer~$t$ that is a~sufficiently large constant multiple of~$\ell^{2a-2}$, and let~$H$ be the graph yielded by~\cref{lem:expanders}.
  We will specify how large this multiple needs to be at the end of the proof.
  We then choose an integer $p=\Oh(t^2\ell^2)$ with a~sufficiently large implicit constant, which gives the required bound on~$p$.

  Fix an orientation of each edge of~$H$.
  Choose~$\phi(u)$, for $u\in V(H)$, independently and uniformly from~$V(G')$.
  Conditional on~$\phi$, choose the paths~$\pi(uv)$ independently over the oriented edges~$uv$, with
  \[
    \Pr[\pi(uv)=P\mid\phi] 
      =|V(G')|^2\Pr[Q=P]
    \qquad\text{for }P\in\mathcal P_{\phi(u),\phi(v)}(G').
  \]
  These probabilities sum to~1 because the ordered endpoints of~$Q$ are independent and uniform in~$V(G')$.
  Each~$\pi(uv)$ therefore has the same distribution as~$Q$.
  For nonincident edges, the endpoint assignments are independent, and the paths are sampled independently conditional on those assignments.
  Hence the two paths are independent.

  Let $e,f$ be nonincident edges of~$H$.
  By the union bound and independence,
  \begin{align}
    \Pr[V(\pi(e)) \cap V(\pi(f))\neq\varnothing]
      &\leq\sum_{v\in V(G')}\Pr[v\in V(\pi(e))]\cdot\Pr[v\in V(\pi(f))]\notag\\
      &\leq\frac{\ell^2}{k^2}\sum_{v\in V(G')}x_v^2\notag\\
      &\leq\frac{\ell^2}{k^2}\sum_{v\in V(G)}x_v
       =\frac{\ell^2}{k},\label{eq:intersection}
  \end{align}
  where we use~\eqref{eq:path-probability}, $x_v^2\leq x_v$, and $x(V(G))=k$.
  Moreover,
  \begin{align}
    &\Pr[\text{an edge joins }V(\pi(e))\text{ and }V(\pi(f))]\notag\\
    &\quad\leq\sum_{uv\in E(G')}\Bigl(
      \Pr[u\in V(\pi(e))]\cdot\Pr[v\in V(\pi(f))]
      +\Pr[v\in V(\pi(e))]\cdot\Pr[u\in V(\pi(f))]\Bigr)\notag\displaybreak[1]\\
    &\quad\leq\frac{2\ell^2}{k^2}\sum_{uv\in E(G')}x_ux_v\notag\\
    &\quad\leq\frac{2\ell^2}{k^2}\sum_{uv\in E(G)}x_ux_v
      \leq  \frac{2\ell^2}{p}.\label{eq:adjacency}
  \end{align}
  The two summands account for the two choices of which path contains which endpoint of an edge.
  We again use the union bound, independence, and~\eqref{eq:path-probability}; the last inequality follows from the lemma assumption that $e_G(x)\leq k^2/p$.
  Since $|E(H)|\leq3t/2$ and $k\geq p$, the probability of any intersection or adjacency between paths representing nonincident edges is
  \begin{equation}\label{eq:bad-pairs}
    \Oh\left(t^2\left(\frac{\ell^2}{k} + \frac{2\ell^2}{p}\right)\right) = \Oh(t^2\ell^2/p)\leq1/4,
  \end{equation}
  by our choice of~$p$.

  Summing~\eqref{eq:path-probability} over vertices gives
  \[
    \E[|V(\pi(e))|]\leq\frac{\ell}{k}\sum_{v\in V(G')}x_v\leq\ell.
  \]
  By linearity of expectation and Markov's inequality,
  \[
    \Pr\left[\sum_{e\in E(H)}|V(\pi(e))|\leq4\ell|E(H)|\right]\geq\frac34.
  \]
  Fix a~universal constant~$c$ (not depending on~$t$) such that $8|E(H)|\leq c \cdot \tw(H)$.
  
  Whenever $\sum_{e\in E(H)}|V(\pi(e))|\leq4\ell|E(H)|$, dividing the right-hand side by~$c\ell$ shows that at most $4|E(H)|/c\leq\tw(H)/2$ paths have more than~$c\ell$ vertices.
  Together with~\eqref{eq:bad-pairs}, this gives positive probability that~$(\phi,\pi)$ is an induced almost-embedding of~$H$ in~$G'$ and that at most~$\tw(H)/2$ paths exceed this size (indeed, both events have probability at least $3/4$).
  Fix such an outcome.

  Delete every edge~$e$ of~$H$ for which $|V(\pi(e))|>c\ell$, and remove the resulting isolated vertices, obtaining~$H'$.
  Adding one edge to a~graph increases its treewidth by at most~1: add one endpoint of that edge to every bag of a~tree-decomposition.
  Removing isolated vertices preserves positive treewidth.
  Consequently,
  \begin{equation}\label{eq:remaining-treewidth}
    \tw(H')\geq\tw(H)/2=\Omega(t).
  \end{equation}
  Let
  \[
    F:=G\left[\bigcup_{e\in E(H')}V(\pi(e))\right].
  \]
  Restricting~$\phi$ and~$\pi$ to~$H'$ gives an induced almost-embedding of~$H'$ in~$F$.

  Every retained path has at most~$c\ell$ vertices.
  Since the almost-embedding is induced, every edge of~$F$ has both endpoints on one retained path or joins paths assigned to incident edges of~$H'$.
  There are at most~$3t/2$ retained paths and at most~$3t$ pairs of incident edges, since~$H'$ is subcubic.
  Each path or pair of paths contains~$\Oh(\ell)$ vertices and therefore accounts for at most~$\Oh(\ell^2)$ edges in~$F$.
  Hence
  \begin{equation}\label{eq:induced-edges}
    |E(F)|=\Oh(t\ell^2).
  \end{equation}
  On the other hand, \cref{lem:path-treewidth} with~$d=3$ and~\eqref{eq:remaining-treewidth} give $\tw(F)=\Omega(t)$.
  The graph~$F$ is induced in~$G$ and hence belongs to~$\mathcal C$.
  Applying~\eqref{eq:treewidth}, we obtain
  \[
    t=\Oh_{\mathcal C}\bigl((t\ell^2)^{1-1/a}\bigr),
    \qquad\text{and thus}\qquad
    t=\Oh_{\mathcal C}(\ell^{2a-2}).
  \]
  Choosing the constant in the definition of~$t$ larger than the last implicit constant gives a~contradiction.
  The choices of~$t$ and then~$p$ depend only on~$\mathcal C$ and~$\ell$, as required.
\end{proof}

We can now conclude.

\begin{proof}[Proof of~\cref{lem:cheap-separator}]
  Choose $\ell:=\lceil c_0\delta^{-1}\log n\rceil$, where~$c_0$ is a~sufficiently large universal constant for~\cref{lem:balanced-flow}, and set~$\gamma$ by~\eqref{eq:flow-parameters}.
  Let~$p$ be the integer given by~\cref{lem:no-flow}.
  Then $p=\Oh_{\mathcal C}(\ell^{4a-2})$, which gives~\eqref{eq:separator-parameters}.
  Apply~\cref{lem:balanced-flow}.
  Item~\ref{item:balanced-flow-paths} contradicts~\cref{lem:no-flow}, so item~\ref{item:balanced-flow-separator} gives a~$(\delta,x)$-light balanced separator.
  The sampling in~\cref{lem:no-flow} existentially shows that item~\ref{item:balanced-flow-paths} cannot hold, but will not be used in the overall algorithm; from this section, we will only rely on the deterministic algorithm of~\cref{lem:balanced-flow}.
\end{proof}

Increasing~$p$ strengthens the assumptions of~\cref{lem:cheap-separator}, so the conclusion holds for every larger threshold.
By~\eqref{eq:separator-parameters}, we may therefore choose~$p$ nondecreasing in~$n$.

\section{A~linear relaxation and conditioning}\label{sec:relaxation}

To apply~\cref{lem:cheap-separator} to a~target~$k \geq p$, we need vertex costs~$x_v\in[0,1]$ satisfying
\[
  x(V(G))=k
  \qquad\text{and}\qquad
  e_G(x)\leq k^2/p.
\]
We obtain these costs from a~feasible point of a~linear relaxation by conditioning on the inclusion of a~small vertex set.

Fix an $n$-vertex graph~$G$, an integer $1 \leq k \leq n$, and an integer~$r \geq 1$.
Introduce a~variable~$y_A$ for every $A \subseteq V(G)$ with $|A| \leq r+2$.
Consider the following system of linear constraints.
\begin{equation}\label{eq:relaxation}
  \begin{aligned}
    y_\varnothing&=1,\\
    0\leq y_A&\leq1
      &&(|A|\leq r+2),\\
    y_{A\cup\{v\}}&\leq y_A
      &&(|A|\leq r+1,\ v\in V(G)),\\
    y_A&=0
      &&(|A|\leq r+2,\ A\text{ is not an independent set}),\\
    \sum_{v\in V(G)}y_{A\cup\{v\}}&=ky_A
      &&(|A|\leq r+1).
  \end{aligned}
\end{equation}
Let $R_r(G,k)$ be the polytope defined by~\eqref{eq:relaxation}.
Every independent set~$I$ of size~$k$ gives a~point of this polytope by setting
\[
  y_A:=\mathbf1[A\subseteq I].
\]
The first four rows hold directly.
For the last row, observe that
\[
  \sum_{v\in V(G)}\mathbf1[A\cup\{v\}\subseteq I]
  =\mathbf1[A\subseteq I]\sum_{v\in V(G)}\mathbf1[v\in I]
  =k \cdot \mathbf1[A\subseteq I].
\]
The indexed variables replace products of 0--1 indicators, as in the Sherali--Adams construction~\cite{Sherali90}.

The number of variables is
\begin{equation}\label{eq:dimension}
  d:=\sum_{j=0}^{\min\{n,r+2\}}\binom{n}{j}
     =n^{\Oh(r)}.
\end{equation}
There are $n^{\Oh(r)}$ constraints, which can all be checked in that time.

\subsection{Conditioning}\label{subsec:conditioning}

Fix a~rational vector~$y\in R_r(G,k)$.
For $A\subseteq V(G)$ with $|A|\leq r$ and $y_A>0$, define
\[
  x_v:=\frac{y_{A\cup\{v\}}}{y_A}.
\]
The constraints $0\leq y_{A\cup\{v\}}\leq y_A$ in~\eqref{eq:relaxation} give $0\leq x_v\leq1$, and the last row of~\eqref{eq:relaxation} gives
\[
  \sum_{v\in V(G)}x_v
    =\frac{1}{y_A}\sum_{v\in V(G)}y_{A\cup\{v\}}
    =k.
\]
Thus the numbers~$x_v/k$ define a~probability distribution on~$V(G)$.
For a~vertex~$v$ with $x_v>0$, adding~$v$ to~$A$ preserves the positivity of the denominator, since $y_{A\cup\{v\}}>0$.
We call adding~$v$ to~$A$ and recomputing the ratios \emph{conditioning} on~$v$.
Conditioning on a~small set of variables is used in the rounding algorithms of Barak, Raghavendra, and Steurer~\cite{Barak11} and Raghavendra and Tan~\cite{RT12}.

\begin{lemma}\label{lem:conditioning}
  Let~$p\geq1$ be an integer and let~$y\in R_r(G,k)$ be rational.
  If $r\geq2p\log n$, then a~deterministic algorithm finds a~set $A\subseteq V(G)$, with $|A|\leq r$ and $y_A>0$, such that the vector
  \[
    x := \left(\frac{y_{A\cup\{v\}}}{y_A}\right)_{v \in V(G)}
  \]
  satisfies
  \[
    x\in[0,1]^{V(G)},\qquad x(V(G))=k,\qquad e_G(x)\leq k^2/p.
  \]
  Its running time is polynomial in~$r$ and in the encoding lengths of~$y$ and~$p$.
\end{lemma}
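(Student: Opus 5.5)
Starting from $A:=\varnothing$, I will grow $A$ greedily one vertex at a time, and use the Shannon entropy of the distribution $q_v:=x_v/k$ as a potential. For the current $A$ with $y_A>0$ and $|A|\leq r$, the rows of~\eqref{eq:relaxation} give $x\in[0,1]^{V(G)}$ and $x(V(G))=k$, as already observed. If $e_G(x)\leq k^2/p$, we stop. Otherwise we pick a vertex $v$ with $x_v>0$ and set $A:=A\cup\{v\}$. The new vector is $x'_w=y_{A\cup\{v,w\}}/y_{A\cup\{v\}}$, and it needs $|A|+2\leq r+2$ so that all variables are defined. Define $H(x):=\sum_w q_w\log(1/q_w)\in[0,\log n]$. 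The goal is to show that, for a suitable $v$, $H(x')\leq H(x)-1/(2p)$, or some comparable decrement. With $r\geq 2p\log n$ the process must then stop before $|A|$ exceeds $r$. Each step costs one pass over $n$ candidates and $n$ ratios, so the running time is polynomial.

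The key step is the entropy drop. Treat the values $y_{A\cup B}/y_A$ for $|B|\leq 2$ as pseudo-moments. Define a pseudo-joint distribution on pairs by $\mu(v,w):=y_{A\cup\{v,w\}}/(k^2 y_A)$ for $v\neq w$, and $\mu(v,v):=x_v/k^2$. By the last row of~\eqref{eq:relaxation} applied to $A\cup\{v\}$, we have $\sum_w y_{A\cup\{v,w\}}=k\,y_{A\cup\{v\}}$, which already includes the term $w=v$. So the marginal of $\mu$ in each coordinate is exactly $q$, and $\mu(\cdot\mid v)=x'^{(v)}/k$, the conditioned distribution. Nonnegativity follows from $y\geq 0$, so $\mu$ is a genuine probability distribution on $V(G)^2$. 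Averaging the conditional entropy over $v\sim q$ gives $\sum_v q_v H(x'^{(v)})=H(W\mid V)=H(W)-I(V;W)$, where $(V,W)\sim\mu$. It therefore suffices to lower bound the mutual information $I(V;W)$ by roughly $1/p$. Then some $v$ with $q_v>0$ achieves the required drop, and we find it by trying all $v$.

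To bound the mutual information, I will use the independence constraints. For an edge $uv$, $y_{A\cup\{u,v\}}=0$, so $\mu$ puts zero mass on edges. The product distribution $q\otimes q$, however, puts mass $\sum_{uv\in E}2q_uq_v=2e_G(x)/k^2>2/p$ on ordered edge pairs. A total variation gap of at least $2/p$ between $\mu$ and $q\otimes q$ gives, by Pinsker's inequality, $I(V;W)=D(\mu\,\|\,q\otimes q)\geq \tfrac{1}{2\ln 2}\,(2\cdot 2/p)^2$. This is only of order $1/p^2$, which is too weak. So I will instead use the sharper bound available when $\mu$ vanishes on a set $E$ of $q\otimes q$-mass $\beta$: data processing onto the indicator $\mathbf 1[(V,W)\in E]$ yields $D\geq \log\frac{1}{1-\beta}\geq \beta\log e\geq 2/p$. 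This linear bound is what $r\geq 2p\log n$ needs, since it gives a decrement of at least $2/p$ per step, hence at most $p\log n/2\leq r$ steps.

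The main obstacle I expect is the bookkeeping around the diagonal and the index bounds. I need to check that the normalizations make the marginals of $\mu$ exactly $q$, so that the conditional distributions really are the recomputed ratios. I also need to check that all variables used exist, which requires $|A|\leq r$ throughout.
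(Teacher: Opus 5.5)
Your entropy argument is correct and matches the paper's. Your $\mu$ is the paper's $q_{uv}=y_{A\cup\{u,v\}}/(k^2y_A)$; your diagonal convention agrees because $A\cup\{v,v\}=A\cup\{v\}$. In both, the averaged conditional entropy equals $H(\pi)-I(V;W)$. The paper's Jensen step bounds $\sum_{u,v}q_{uv}\log\frac{q_{uv}}{\pi_u\pi_v}$ below by $-\log\sum_{q_{uv}>0}\pi_u\pi_v\geq-\log(1-2e_G(x)/k^2)$, which is exactly your data-processing bound onto the indicator of the ordered edge pairs. So you have fully established that a suitable $A$ exists and is reached after fewer than $p\log n/2$ conditionings.

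What does not work as written is the algorithm. You find the good vertex by trying all $v$ and comparing the entropies of the conditioned distributions $\mu(\cdot\mid v)$. These entropies are sums of terms $s\log(1/s)$ with rational $s$, so they are irrational in general. They cannot be computed by the exact rational arithmetic on which the running-time claim rests, and you do not say how the comparison is performed.

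The paper sidesteps this by using the entropy argument only for existence. It then enumerates all $A$ with $|A|\leq r$ and $y_A>0$, and tests the rational inequality $p\sum_{uv\in E(G)}y_{A\cup\{u\}}y_{A\cup\{v\}}\leq k^2y_A^2$, which is equivalent to $e_G(x)\leq k^2/p$. Since $y$ has a coordinate for each such set, this is polynomial in the encoding length of $y$.

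Alternatively, keep your greedy but approximate each entropy to additive error $1/(2p)$, which takes polynomial time in the bit lengths involved, and pick the minimizer. Its true drop exceeds $2/p-2\cdot\frac{1}{2p}=1/p$, so fewer than $p\log n\leq r/2$ steps occur, and the slack in $r\geq 2p\log n$ absorbs the error. Repaired this way, your greedy reads only $\Oh(n^2)$ coordinates of $y$ per step. That is a mild efficiency gain over the enumeration, though the paper does not need it, since the surrounding ellipsoid calls already cost $n^{\Oh(r)}$.
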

\begin{proof}
  We first prove that a~suitable set~$A$ exists.
  Fix~$A$ with $|A|<r$ and $y_A>0$, and write
  \[
    \pi_u:=\frac{x_u}{k},
    \qquad q_{uv}:=\frac{y_{A\cup\{u,v\}}}{k^2y_A}.
  \]
  Applying the last row of~\eqref{eq:relaxation} at~$A\cup\{v\}$ gives
  \begin{equation}\label{eq:marginals}
    \sum_uq_{uv}
      =\frac{1}{k^2y_A}\sum_u y_{A\cup\{u,v\}}
      =\frac{y_{A\cup\{v\}}}{ky_A}
      =\pi_v.
  \end{equation}
  By symmetry, $\sum_vq_{uv}=\pi_u$.
  Therefore~$q$ is a~probability distribution on~$V(G)^2$, with both marginals equal to~$\pi$.
  For each~$v$ with $\pi_v>0$, let~$\pi^v$ be the distribution associated with~$A\cup\{v\}$.
  Its coordinates are
  \begin{equation}\label{eq:conditioned-distribution}
    \pi^v_u:=\frac{y_{A\cup\{u,v\}}}{k y_{A\cup\{v\}}}
         =\frac{q_{uv}}{\pi_v}.
  \end{equation}

  Our proof uses an entropy-decrease argument similar to that of Raghavendra and Tan~\cite[Section~4]{RT12}, applied here to the distribution~$\pi$ on~$V(G)$.
  We show that, whenever $e_G(x)>k^2/p$, conditioning on some vertex decreases entropy by more than~$2/p$.
  Since entropy lies between~0 and~$\log n$, this bounds the number of successive choices needed to obtain a~suitable set~$A$.
  For a~probability distribution~$\pi$ on~$V(G)$, its \emph{entropy} is
  \[
    H(\pi):=-\sum_{u:\pi_u>0}\pi_u\log\pi_u.
  \]
  Every summand is nonnegative, since $0<\pi_u\leq1$, so $H(\pi)\geq0$.
  For the upper bound, Jensen's inequality for the concave function~$\log$, with weights~$\pi_u$ summing to~1, gives
  \[
    \begin{aligned}
      H(\pi)
        &=\sum_{u:\pi_u>0}\pi_u\log\frac{1}{\pi_u}\\
        &\leq\log\left(\sum_{u:\pi_u>0}\pi_u\frac{1}{\pi_u}\right)
          =\log\bigl|\{u:\pi_u>0\}\bigr|\leq\log n.
    \end{aligned}
  \]
  Expanding the entropy definitions, we obtain
  \begin{equation}\label{eq:entropy-drop}
    \begin{aligned}
      H(\pi)-\sum_{v:\pi_v>0}\pi_v H(\pi^v)
        &=-\sum_{u:\pi_u>0}\pi_u\log\pi_u
          +\sum_{u,v:q_{uv}>0}q_{uv}\log\frac{q_{uv}}{\pi_v}\\
        &=\sum_{u,v:q_{uv}>0}q_{uv}
          \left(\log\frac{q_{uv}}{\pi_v}-\log\pi_u\right)\\
        &=\sum_{u,v:q_{uv}>0}q_{uv}\log\frac{q_{uv}}{\pi_u\pi_v}.
    \end{aligned}
  \end{equation}
  The first equality uses~\eqref{eq:conditioned-distribution}, and the second uses the marginal identity $\sum_vq_{uv}=\pi_u$ from~\eqref{eq:marginals} and symmetry.
  The sum on the left averages the entropies after conditioning on a~vertex, with weights~$\pi_v$.

  Since the positive entries of~$q$ sum to~1, Jensen's inequality gives
  \[
    \sum_{u,v:q_{uv}>0}q_{uv}\log\frac{q_{uv}}{\pi_u\pi_v}
      =-\sum_{u,v:q_{uv}>0}q_{uv}\log\frac{\pi_u\pi_v}{q_{uv}}
      \geq-\log\left(\sum_{u,v:q_{uv}>0}\pi_u\pi_v\right).
  \]
  If $uv\in E(G)$, then $q_{uv}=0$ by~\eqref{eq:relaxation}.
  Thus, since $\sum_{u,v}\pi_u\pi_v=(\sum_u\pi_u)(\sum_v\pi_v)=1\cdot1=1$,
  \[
    \begin{aligned}
      \sum_{u,v:q_{uv}>0}\pi_u\pi_v
        &=1-\sum_{u,v:q_{uv}=0}\pi_u\pi_v\\
        &\leq1-\sum_{uv\in E(G)}(\pi_u\pi_v+\pi_v\pi_u)\\
        &=1-\frac{2}{k^2}\sum_{uv\in E(G)}x_ux_v
          =1-\frac{2e_G(x)}{k^2}.
    \end{aligned}
  \]
  The two summands for each edge account for the ordered pairs~$(u,v)$ and~$(v,u)$.
  Combining these inequalities yields
  \begin{equation}\label{eq:entropy-edge-sum}
    H(\pi)
      -\sum_{v:\pi_v>0}\pi_v H(\pi^v)
      \geq-\log\left(1-\frac{2e_G(x)}{k^2}\right)
      \geq\frac{2e_G(x)}{k^2}.
  \end{equation}
  The last inequality uses $-\log(1-t)\geq t$ for $0\leq t<1$.
  The logarithm in~\eqref{eq:entropy-edge-sum} has a~positive argument, since $1-2e_G(x)/k^2\geq\sum_{u,v:q_{uv}>0}\pi_u\pi_v>0$.
  The sum is positive because~$q$ has total mass~1 and $q_{uv}>0$ implies $\pi_u\pi_v>0$.
  Since the weights~$\pi_v$ sum to~1, \eqref{eq:entropy-edge-sum} implies that if $e_G(x)>k^2/p$, some vertex~$v$ with $\pi_v>0$ satisfies
  \[
    H(\pi^v)<H(\pi)-2/p.
  \]

  Start with $A:=\varnothing$ and, while the required inequality fails, add such a~vertex~$v$.
  Every denominator remains positive, and each choice increases~$|A|$ by~1.
  Since entropy starts at most~$\log n$ and remains nonnegative, fewer than~$r$ such choices are possible under the assumption $r\geq2p\log n$.
  Hence a~set~$A$ satisfying the conclusion exists.

  To find one deterministically, examine all sets $A\subseteq V(G)$ of size at most~$r$ with $y_A>0$, and test
  \[
    p\sum_{uv\in E(G)}y_{A\cup\{u\}}y_{A\cup\{v\}}
      \leq k^2y_A^2.
  \]
  This is equivalent to $e_G(x)\leq k^2/p$.
  Upon finding a~set~$A$ that passes the test, set $x_v:=y_{A\cup\{v\}}/y_A$ for every $v\in V(G)$ and return~$(A,x)$.
  All required coordinates occur in~$y$, and there are at most~$d$ candidate sets, where~$d$ is given by~\eqref{eq:dimension}.
  Each test uses $\Oh(n^2)$ rational operations on numbers of polynomial encoding length.
  This proves the running-time bound.
\end{proof}

\section{The approximation algorithm}\label{sec:algorithm}

We combine \cref{lem:cheap-separator,lem:conditioning} in a~recursive algorithm.
A~call on~$G$ returns an independent set~$I_G$ and an integer~$u_G\geq\alpha(G)$, with~$|I_G|$ approximating~$u_G$.
For each~$k$ (guess of the size of a~maximum independent set), the outputs~$(I_C,u_C)$ from the components of a~balanced separator either yield a~set of size close to~$k$ or a~valid inequality violated by the current query, which we can transmit to the ellipsoid algorithm.
We first establish these inequalities and explain their use in the overall algorithm.

\subsection{Feasibility and valid inequalities}\label{subsec:feasibility}

For a~graph~$G$, an integer~$r\geq1$, and an integer target~$k$, let $d$ be as in~\eqref{eq:dimension}.
We write
\[
  z=(z_A)_{A\subseteq V(G),\,|A|\leq r+2}\in\mathbb R^d
\]
for a~variable vector and~$y\in\mathbb Q^d$ for the query point.
Define
\begin{equation}\label{eq:polytope}
  \mathcal P_r(G,k):=
  \conv\left\{
    \bigl(\mathbf1[A\subseteq I]\bigr)_{\substack{A\subseteq V(G)\\|A|\leq r+2}}
    : I\text{ is an independent set of }G,\ |I|=k
  \right\}.
\end{equation}
Here $\conv$ denotes the convex hull, with $\conv\varnothing=\varnothing$.
Thus
\begin{equation}\label{eq:polytope-properties}
  \mathcal P_r(G,k)\neq\varnothing\ \Longleftrightarrow\ \alpha(G)\geq k,
  \qquad \mathcal P_r(G,k)\subseteq R_r(G,k).
\end{equation}

An inequality is \emph{valid for}~$\mathcal P_r(G,k)$ if every point of this polytope satisfies it.
We already observed that the constraints in~\eqref{eq:relaxation} are valid for~$\mathcal P_r(G,k)$.
The following lemma gives further such inequalities from upper bounds on the independence numbers of induced subgraphs.

\begin{lemma}\label{lem:valid-cut}
  Let $C\subseteq V(G)$ and $u_C\geq\alpha(G[C])$.
  For every $A\subseteq V(G)$ with $|A|\leq r$, every vector $z\in\mathcal P_r(G,k)$ satisfies
  \begin{equation}\label{eq:valid-cut}
    \sum_{v\in C}z_{A\cup\{v\}}\leq u_Cz_A.
  \end{equation}
  A~point $y\in\mathbb R^d$ with $y_A>0$ violates this inequality whenever $\sum_{v\in C}y_{A\cup\{v\}}/y_A>u_C$.
\end{lemma}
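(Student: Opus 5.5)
Both sides of~\eqref{eq:valid-cut} are linear in~$z$, and $\mathcal P_r(G,k)$ is by~\eqref{eq:polytope} the convex hull of the encodings of independent sets of size~$k$. So it suffices to check the inequality at each vertex $z=(\mathbf1[B\subseteq I])_B$ of the polytope, where $I$ is an independent set of~$G$ with $|I|=k$. Convex combinations then preserve it. If no such~$I$ exists, the polytope is empty and the statement is vacuous. The indices are in range: $|A\cup\{v\}|\leq r+1\leq r+2$, so every coordinate appearing in~\eqref{eq:valid-cut} is a~variable of~$z$.

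Fix such an~$I$ and split into two cases according to whether $A\subseteq I$.

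If $A\not\subseteq I$, then $z_A=0$. Also $A\cup\{v\}\not\subseteq I$ for every~$v$, so each $z_{A\cup\{v\}}=0$. Both sides of~\eqref{eq:valid-cut} vanish, and the inequality holds with equality.

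If $A\subseteq I$, then $z_A=1$ and $z_{A\cup\{v\}}=\mathbf1[v\in I]$. The left-hand side is therefore $|I\cap C|$. Since $I\cap C$ is an independent set of~$G[C]$, we get $|I\cap C|\leq\alpha(G[C])\leq u_C=u_Cz_A$.

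For the violation claim, take $y$ with $y_A>0$ and multiply the assumed inequality $\sum_{v\in C}y_{A\cup\{v\}}/y_A>u_C$ by~$y_A$. This gives $\sum_{v\in C}y_{A\cup\{v\}}>u_Cy_A$, which is exactly the negation of~\eqref{eq:valid-cut} at~$y$.

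None of these steps is a~real obstacle. The only points needing care are handling the empty polytope and confirming that the index sets have size at most $r+2$.
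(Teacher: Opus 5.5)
Your proposal is correct and follows the paper's proof essentially step for step: you check the inequality at each encoding of an independent set $I$ of size $k$, split on whether $A\subseteq I$, extend to all of $\mathcal P_r(G,k)$ by linearity over convex combinations, and obtain the violation criterion by scaling with $y_A>0$. Your remarks on the empty polytope and on the index sizes $|A\cup\{v\}|\leq r+1$ are harmless additions that the paper leaves implicit.
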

\begin{proof}
  Let~$I$ be an independent set of~$G$ of size~$k$, and evaluate~\eqref{eq:valid-cut} at the corresponding vector in~\eqref{eq:polytope}.
  If $A\nsubseteq I$, both sides of~\eqref{eq:valid-cut} are~0.
  If $A\subseteq I$, the inequality reads $|I\cap C|\leq u_C$, which holds since~$I\cap C$ is an independent set of~$G[C]$.
  By linearity, the inequality holds for every convex combination of these vectors, hence throughout~$\mathcal P_r(G,k)$.
  For $y_A>0$, dividing~\eqref{eq:valid-cut} by~$y_A$ shows that it is violated precisely when
  \[
    \sum_{v\in C}\frac{y_{A\cup\{v\}}}{y_A}>u_C.\qedhere
  \]
\end{proof}

A~\emph{strong separation oracle} for a~rational polytope~$Q\subseteq\mathbb R^d$, on input~$y\in\mathbb Q^d$, either certifies $y\in Q$ or returns a~linear inequality valid for~$Q$ and violated by~$y$.
We use the exact rational ellipsoid feasibility algorithm of Gr\"otschel, Lov\'asz, and Schrijver~\cite[Theorem~6.4.1]{Grotschel93}.
Given such an oracle and a~bound on the encoding length of each inequality in a~description of~$Q$, it finds a~point of~$Q$ or certifies emptiness.
This guarantee also applies when the polytope is contained in a~proper affine subspace.
The number of oracle calls and the additional work are polynomial in~$d$ and the encoding bound.
Here $\mathcal P_r(G,k)\subseteq[0,1]^d$ has 0--1 vertices of encoding length~$\Oh(d)$, so a~polynomial bound in~$d$ on the encoding length of each defining inequality follows from~\cite[Lemma~6.2.4]{Grotschel93}.

We run the ellipsoid algorithm for~$\mathcal P_r(G,k)$, with an early stopping rule.
Each query~$y$ is passed to~$\RoundOrCut(G,k,y)$.
This procedure returns either a~sufficiently large (approximate) independent set, in which case we stop, or an inequality valid for~$\mathcal P_r(G,k)$ and violated by~$y$, in which case the ellipsoid algorithm continues.

\Cref{subsec:recursion} proves that the procedure always returns one of these two answers.
Consequently, until an independent set is returned, every answer is a~separation for the same fixed polytope.
The exact feasibility guarantee then gives either an independent set or a~certificate that~$\mathcal P_r(G,k)$ is empty. 
This is the \emph{round-or-cut} approach introduced by Carr et~al.~\cite{CarrFLP00}.
Chakrabarty and Negahbani~\cite[Section~3.2]{Chakrabarty18} similarly use an auxiliary optimization problem to obtain either an approximate solution or an inequality separating the current query from a~fixed polytope.

\subsection{Algorithm and correctness}\label{subsec:recursion}

On an edgeless input, return all vertices.
Otherwise, let~$n\geq2$ be the number of vertices of the original input, and fix rational $0<\varepsilon<1$.
Put
\begin{equation}\label{eq:global-parameters}
  \lambda:=\lceil\log_{3/2}n\rceil,
  \qquad\delta:=\frac{\varepsilon}{\lambda}.
\end{equation}
Choose the integer threshold~$p$ given by~\cref{lem:cheap-separator} for~$n$ and~$\delta$, and set $r:=2p\lambda$.
We use the nondecreasing choice of~$p$ specified after~\cref{lem:no-flow}.
Thus the same threshold applies to every recursive induced subgraph, and $r\geq2p\log n$ satisfies~\cref{lem:conditioning}.
All these parameters remain fixed throughout the recursion, with $p=\Oh_{\mathcal C}\bigl((\lambda/\delta)^{4a-2}\bigr)$.

\subparagraph*{The procedure {\fontfamily{cmr}\bfseries\scshape Solve}.}
For an integer $t\geq1$, put $h(t):=\lceil\log_{3/2}t\rceil$, and set $h(0):=0$.
We define a~procedure~$\Solve(G)$ that returns an independent set~$I_G$ and an integer~$u_G$ satisfying
\begin{equation}\label{eq:invariant}
  \alpha(G)\leq u_G\leq |V(G)|,
  \qquad |I_G|\geq(1-\delta)^{h(|V(G)|)}u_G.
\end{equation}
Each recursive call from~$G$ is on~$G[C]$ with~$|C|\leq2|V(G)|/3$, where~$C$ is a~component left after deleting a~balanced separator.
Thus~$h(|V(G)|)$ bounds the recursion depth.

The procedure considers the targets $k=|V(G)|,|V(G)|-1,\ldots,1$.
For each target, it either returns an independent set of size at least
\[
  (1-\delta)^{h(|V(G)|)}k,
\]
or proves $\alpha(G)<k$.
For $k\leq p$, it searches exhaustively for an independent set of size~$k$; for $k>p$, it runs the ellipsoid algorithm with queries handled by~\hyperref[par:round-or-cut]{$\RoundOrCut$}, described below.
At the first successful target~$k$, the preceding rejections give $\alpha(G)\leq k$, so the procedure returns $u_G:=k$.
A~returned independent set establishes the approximation bound for the current target.
The pseudocode is given in~\cref{alg:solve}.

\medskip
\begingroup
\SetAlFnt{\small}
\SetAlgoSkip{}
\setlength{\intextsep}{6pt plus 2pt minus 2pt}
\begin{algorithm}[!htbp]
  \caption{$\Solve(G)$}\label{alg:solve}
  \KwIn{An induced subgraph~$G$ of the input.}
  \KwData{The shared parameters~$\lambda,\delta,p,r$, fixed for the original input.}
  \KwOut{A~pair~$(I_G,u_G)$ satisfying~\eqref{eq:invariant}, with~$I_G$ an independent set of~$G$.}
  \If{$E(G)=\varnothing$}{
    \Return $(V(G),|V(G)|)$\;
  }
  \For{$k=|V(G)|,|V(G)|-1,\ldots,1$}{
    \eIf{$k\leq p$}{
      \ForEach{$I\subseteq V(G)$ with $|I|=k$}{
        \If{$I$ is independent}{
          \Return $(I,k)$\;
        }
      }
    }{
      Run the exact ellipsoid feasibility algorithm for~$\mathcal P_r(G,k)$\;
      Handle each oracle query or feasible output~$y$ as follows:\;
      \Indp
      Call $\RoundOrCut(G,k,y)$\;
      \eIf{it returns an independent set~$I$}{
        Stop the ellipsoid algorithm and \Return $(I,k)$\;
      }{
        Pass the returned inequality to the ellipsoid algorithm and resume it\;
      }
      \Indm
      Once the ellipsoid algorithm certifies emptiness, continue with the next target\;
    }
  }
\end{algorithm}
\endgroup

\subparagraph*{The procedure {\fontfamily{cmr}\bfseries\scshape RoundOrCut}.}\label{par:round-or-cut}
For a~target~$k>p$, the procedure~$\RoundOrCut(G,k,y)$ first checks~\eqref{eq:relaxation} and returns a~violated inequality if one exists.
When a~query violates an equality in~\eqref{eq:relaxation}, it returns the violated one of its two associated inequalities.
For a~query satisfying~\eqref{eq:relaxation}, it obtains~$(A,x)$ by~\cref{lem:conditioning} and finds a~$(\delta,x)$-light balanced separator~$S$ by~\cref{lem:cheap-separator}.
It then calls~$\Solve(G[C])$ to obtain~$(I_C,u_C)$ for each component~$C$ of~$G-S$.

Each recursive call on~$G[C]$ uses the shared parameters.
The parent computes $x(C)=\sum_{v \in C}y_{A\cup\{v\}}/y_A$ for its query~$y$ and compares it with the returned bound~$u_C$.
If $x(C)>u_C$ for some component, the parent returns~\eqref{eq:valid-cut} to the ellipsoid algorithm for~$\mathcal P_r(G,k)$; otherwise, it returns the union of the sets~$I_C$.
The pseudocode is given in~\cref{alg:round-or-cut}.

\begingroup
\SetAlFnt{\small}
\SetAlgoSkip{}
\setlength{\intextsep}{6pt plus 2pt minus 2pt}
\begin{algorithm}[!htbp]
  \caption{$\RoundOrCut(G,k,y)$}\label{alg:round-or-cut}
  \KwIn{An induced subgraph~$G$ of the input, an integer~$k>p$, and a rational query vector~$y$, with one coordinate $y_A$ for each $A\subseteq V(G)$ with $|A| \leq r+2$.}
  \KwData{The shared parameters~$\lambda,\delta,p,r$, fixed for the original input.}
  \KwOut{An independent set~$I$ with $|I|\geq(1-\delta)^{h(|V(G)|)}k$, or an inequality valid for the polytope~$\mathcal P_r(G,k)$ defined by~\eqref{eq:polytope} and violated by~$y$.}
  \If{$y$ violates~\eqref{eq:relaxation}}{
    \Return a~violated inequality from~\eqref{eq:relaxation}\;
  }
  Obtain $(A,x)$ by~\cref{lem:conditioning} with parameter~$p$\;
  Obtain a~$(\delta,x)$-light balanced separator~$S$ by~\cref{lem:cheap-separator}\;
  $I\leftarrow\varnothing$\;
  \ForEach{component~$C$ of~$G-S$}{
    $(I_C,u_C)\leftarrow\Solve(G[C])$\;
    \If{$x(C)>u_C$}{
      \Return $\displaystyle\sum_{v\in C}z_{A\cup\{v\}}\leq u_Cz_A$\;
    }
    $I\leftarrow I\cup I_C$\;
  }
  \Return $I$\;
\end{algorithm}
\endgroup

\FloatBarrier
\begin{lemma}\label{lem:correctness}
  Every call $\Solve(G)$ terminates and satisfies~\eqref{eq:invariant}.
\end{lemma}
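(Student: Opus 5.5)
We argue by strong induction on $|V(G)|$. The base case is an edgeless $G$, for which $(V(G),|V(G)|)$ satisfies~\eqref{eq:invariant} with $h\geq 0$. Now let $G$ have an edge and assume the claim for all induced subgraphs with at most $2|V(G)|/3$ vertices. The targets are tried in decreasing order, so the proof reduces to one per-target claim. For each $k$, the loop body terminates. It then either returns an independent set $I$ with $|I|\geq(1-\delta)^{h(|V(G)|)}k$, or certifies $\alpha(G)<k$. Given this claim, at the first successful target~$k$ every larger target has been refuted, so $\alpha(G)\leq k$. Also $k\leq |V(G)|$, so $u_G:=k$ satisfies~\eqref{eq:invariant}. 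Some target succeeds, because $\alpha(G)\geq1$ and either $k=1\leq p$ is handled exhaustively or the ellipsoid algorithm on a nonempty polytope cannot certify emptiness.

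For $k\leq p$, exhaustive search is exact, so it finds an independent set of size exactly $k\geq(1-\delta)^{h}k$ or proves $\alpha(G)<k$. For $k>p$, we show that every call $\RoundOrCut(G,k,y)$ terminates and returns one of the two outputs in its specification. The exact ellipsoid guarantee of Gr\"otschel, Lov\'asz, and Schrijver then applies to the fixed polytope $\mathcal P_r(G,k)$. It makes finitely many oracle calls, so either some call returns an independent set, or emptiness is certified, and \eqref{eq:polytope-properties} turns emptiness into $\alpha(G)<k$. If the ellipsoid algorithm instead outputs a feasible point $y$, we pass it to $\RoundOrCut$ as well. That $y$ lies in $\mathcal P_r(G,k)$, so no valid inequality is violated, and the call must return a set.

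The correctness of $\RoundOrCut$ proceeds in three steps. Inequalities from~\eqref{eq:relaxation} are valid for $\mathcal P_r(G,k)$ since $\mathcal P_r\subseteq R_r$. Otherwise $y\in R_r(G,k)$, and since $r\geq 2p\log n\geq 2p\log|V(G)|$, \cref{lem:conditioning} gives $(A,x)$ with $x(V(G))=k>p$ and $e_G(x)\leq k^2/p$. The nondecreasing choice of $p$ ensures that the threshold for $|V(G)|\leq n$ vertices is at most the global $p$; for $|V(G)|\geq 2$, \cref{lem:cheap-separator} then yields a $(\delta,x)$-light balanced separator $S$. Each component has at most $2|V(G)|/3$ vertices, so the induction hypothesis gives terminating calls with $(I_C,u_C)$ satisfying~\eqref{eq:invariant}. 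If some $x(C)>u_C$, then \cref{lem:valid-cut} with $|A|\leq r$ and $y_A>0$ makes the returned inequality valid and violated. Otherwise
\[
|I|=\sum_C|I_C|\geq (1-\delta)^{h(2|V(G)|/3)}\sum_C u_C\geq (1-\delta)^{h(|V(G)|)-1}(k-x(S))\geq(1-\delta)^{h(|V(G)|)}k.
\]
This uses $x(S)\leq\delta k$ and $h(|C|)\leq h(|V(G)|)-1$, the latter following from $|C|\leq 2|V(G)|/3$ and monotonicity of $h$. The union $I$ is independent because the components are pairwise nonadjacent.

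The main obstacle is the justification of the ellipsoid step: all oracle answers must separate the same fixed polytope, independent of the recursion. The cuts from \cref{lem:valid-cut} are valid for every independent set of size~$k$ because the $u_C$ are genuine upper bounds, and this relies on the induction hypothesis. We must also check that the encoding bound used by the exact ellipsoid algorithm applies to $\mathcal P_r(G,k)$ itself, not to the returned cuts, so termination is not affected by the size of the returned inequalities. Finally, the edge case $h(|C|)$ for small $|C|$, in particular $|C|=1$ with $h=0$, needs a brief check of the exponent inequality.
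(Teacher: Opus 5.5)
Your proof is correct and follows essentially the same route as the paper. You use strong induction on $|V(G)|$ and a per-target analysis. In that analysis, $\RoundOrCut$ either returns a cut valid for the fixed polytope $\mathcal P_r(G,k)$ (by \cref{lem:valid-cut} and the inductive bounds $u_C\geq\alpha(G[C])$), or returns a set whose size follows from $x(S)\leq\delta k$ and $h(|C|)\leq h(|V(G)|)-1$. The first successful target then bounds $\alpha(G)$. Your extra checks are details the paper absorbs into its parameter setup and pseudocode: the hypotheses of \cref{lem:conditioning,lem:cheap-separator} in recursive calls, and the case where the ellipsoid algorithm outputs a feasible point.
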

\begin{proof}
  We argue by induction on $n':=|V(G)|$.
  The edgeless case, including the empty graph, is immediate.
  Otherwise, every recursive call is on an induced subgraph with at most $2n'/3 < n'$ vertices, so the induction hypothesis applies to it.

  Consider $k>p$ and a~query~$y$ generated by the ellipsoid algorithm for~$\mathcal P_r(G,k)$.
  If~$y$ violates~\eqref{eq:relaxation}, the returned inequality is valid for~$\mathcal P_r(G,k)$ by~\eqref{eq:polytope-properties}.
  Otherwise, conditioning returns~$A$ with $y_A>0$ and a~vector $x \in[0,1]^{V(G)}$ satisfying $x(V(G))=k$ and $e_G(x)\leq k^2/p$.
  Since $k>p$, \cref{lem:cheap-separator} applies and returns a~balanced separator~$S$ of~$G$ with $x(S)\leq\delta k$.
  For every component~$C$ of~$G-S$, induction gives $u_C\geq\alpha(G[C])$.
  If $x(C)>u_C$, \cref{lem:valid-cut} therefore gives a~valid violated inequality for the fixed polytope~$\mathcal P_r(G,k)$.

  Suppose that every component satisfies $x(C)\leq u_C$.
  The union~$I$ of the sets~$I_C$ is independent since there are no edges between distinct components of~$G-S$.
  Since~$S$ is balanced, every nonempty component satisfies $|C|\leq2n'/3$, and hence $h(|C|)\leq h(n')-1$.
  Thus the induction hypothesis yields
  \begin{align}
    |I|=\sum_C|I_C|
      &\geq(1-\delta)^{h(n')-1}\sum_Cu_C\notag\\
      &\geq(1-\delta)^{h(n')-1}\sum_Cx(C)\notag\\
      &=(1-\delta)^{h(n')-1}(k-x(S))\notag\\
      &\geq(1-\delta)^{h(n')}k.\label{eq:rounding}
  \end{align}
  The last inequality uses the fact that~$S$ is~$(\delta,x)$-light.
  Thus~$\RoundOrCut$ terminates and returns either an independent set of size at least~$(1-\delta)^{h(n')}k$ or a~valid violated inequality.
  By the remarks in~\cref{subsec:feasibility}, for each $k>p$, the ellipsoid algorithm with the early stopping rule terminates with an independent set or a~certificate that $\mathcal P_r(G,k)=\varnothing$.

  For $k\leq p$, we only reject the target~$k$ if exhaustive search finds no independent set of size~$k$, so $\alpha(G)<k$.
  For $k>p$, rejection means that the ellipsoid algorithm certifies $\mathcal P_r(G,k)=\varnothing$, which also implies $\alpha(G)<k$ by~\eqref{eq:polytope-properties}.
  Every rejected target therefore is larger than~$\alpha(G)$.
  Since the computation for each target terminates, the loop must return an independent set by the time it reaches~$k=\alpha(G)$.

  Let~$k$ be the first successful run.
  If $k=|V(G)|$, then $\alpha(G)\leq k$ directly.
  Otherwise, target~$k+1$ was rejected, so $\alpha(G)<k+1$, and integrality gives $\alpha(G)\leq k$.
  Thus returning $u_G:=k$ gives the required upper bound.
  The size guarantee follows from~\eqref{eq:rounding} when $k>p$, and from $|I_G|=k$ otherwise.
\end{proof}

\begin{corollary}\label{cor:unweighted}
  Let~$G\in\mathcal C$ and let~$0<\varepsilon<1$ be rational.
  With the parameter choices above, $\Solve(G)$ returns an independent set~$I_G$ satisfying $|I_G|\geq(1-\varepsilon)\alpha(G)$.
\end{corollary}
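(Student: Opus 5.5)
This corollary should follow directly from \cref{lem:correctness} once we compare the exponent $h(|V(G)|)$ with the global parameter $\lambda$ and apply Bernoulli's inequality.

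The edgeless case is immediate: $\Solve(G)$ returns $V(G)$, which equals $\alpha(G)$ vertices.

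Otherwise, $G$ is the original input with $n\geq 2$ vertices. Apply \cref{lem:correctness} to the top-level call. By~\eqref{eq:invariant}, $\Solve(G)$ returns an independent set $I_G$ and an integer $u_G$ with
\[
  u_G\geq\alpha(G)
  \qquad\text{and}\qquad
  |I_G|\geq(1-\delta)^{h(n)}u_G .
\]
Combining the two gives $|I_G|\geq(1-\delta)^{h(n)}\alpha(G)$.

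Now relate the exponent to $\lambda$. By definition $h(n)=\lceil\log_{3/2}n\rceil$, and by~\eqref{eq:global-parameters} this equals $\lambda$. Since $0<\delta=\varepsilon/\lambda<1$, Bernoulli's inequality gives
\[
  (1-\delta)^{\lambda}\geq 1-\lambda\delta=1-\varepsilon .
\]
Hence $|I_G|\geq(1-\varepsilon)\alpha(G)$.

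The only point that needs care is that the shared parameters $\lambda,\delta,p,r$ are fixed from the original $n$, so the top-level exponent is exactly $\lambda$. This is why the invariant uses $h(|V(G)|)$ rather than anything depending on the subgraphs reached in recursion. There is no real obstacle beyond this bookkeeping, since all the work is done in \cref{lem:correctness}.
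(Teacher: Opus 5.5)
Your proposal is correct and follows essentially the same route as the paper's proof. You handle the edgeless case directly, and otherwise combine the invariant~\eqref{eq:invariant} from \cref{lem:correctness} with $h(n)=\lambda$ and Bernoulli's inequality to get $(1-\delta)^{\lambda}\geq 1-\varepsilon$.
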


\begin{proof}
  The edgeless case is immediate.
  Otherwise, $h(n)=\lambda$, so~\cref{lem:correctness} and Bernoulli's inequality give
  \[
    |I_G|\geq(1-\delta)^{h(n)}u_G
      \geq(1-\delta h(n))\alpha(G)
      =(1-\varepsilon)\alpha(G).\qedhere
  \]
\end{proof}

\FloatBarrier
\subsection{Running time and rational arithmetic}\label{subsec:time}

For the time analysis, we may assume that~$\varepsilon$ has encoding length~$\Oh(\log\varepsilon^{-1})$.
Indeed, a~rational~$\varepsilon$ can be rounded down to a~dyadic number~$\varepsilon'$ with $\varepsilon/2<\varepsilon'\leq\varepsilon$, in polynomial time in its encoding length.
Using~$\varepsilon'$ preserves the requested guarantee and changes the final bound by at most a~constant factor in the exponent.

All constants below may depend on~$\mathcal C$.
The parameter choices in~\cref{subsec:recursion} give
\begin{equation}\label{eq:parameter-bounds}
  p=\Oh_{\mathcal C}(\varepsilon^{-(4a-2)}\lambda^{8a-4}),
  \qquad r=\Oh(p\lambda).
\end{equation}
In every recursive call, the dimension in~\eqref{eq:dimension} and the number of constraints in~\eqref{eq:relaxation} are at most~$n^{\Oh(r)}$, where~$n$ is the original input size.
\Cref{subsec:feasibility} establishes a~polynomial bound on the encoding length of each inequality in some description of~$\mathcal P_r(G,k)$.
Every returned inequality has integer coefficients of absolute value~$\Oh(n)$: this holds directly for~\eqref{eq:relaxation}, and for~\eqref{eq:valid-cut} it follows from $u_C\leq|C|$, after repeated coordinates are collected.
Consequently, over all targets in one call to~$\Solve$, the ellipsoid algorithm makes at most~$n^{\Oh(r)}$ queries and uses~$n^{\Oh(r)}$ additional time.
Its rational queries have encoding length~$n^{\Oh(r)}$.

Processing one query takes~$n^{\Oh(r)}$ time, excluding the time spent in recursive calls to~$\Solve$.
This includes checking~\eqref{eq:relaxation}, enumerating the conditioning sets in~\cref{lem:conditioning}, and applying the separator algorithm of~\cref{lem:cheap-separator}.
The coordinates of~$x$ are ratios of entries of the current query, so their encoding lengths are polynomial in the encoding length of that query.
All comparisons are performed by exact rational arithmetic.

Enumeration for targets $k\leq p$ takes~$n^{\Oh(p)}$ time, which is absorbed in~$n^{\Oh(r)}$.
A~call to~$\Solve$ therefore performs~$n^{\Oh(r)}$ work outside its recursive calls and makes at most~$n^{\Oh(r)}$ recursive calls: there are~$n^{\Oh(r)}$ queries, each generating at most~$n$ child calls.
Each child has at most two thirds as many vertices as its parent, so the recursion depth is at most~$\lambda$.
The total running time is consequently
\begin{equation}\label{eq:running-time}
  n^{\Oh(r\lambda)}
    =2^{\Oh(r\lambda^2)}
    =2^{\Oh_{\mathcal C}(p\lambda^3)}
    =2^{\Oh_{\mathcal C}(\varepsilon^{-(4a-2)}(\log n)^{8a-1})}.
\end{equation}
The dimension and coefficient bounds above apply to every recursive call, so every ellipsoid query has encoding length~$n^{\Oh(r)}$.
For fixed~$\varepsilon$ and~$\mathcal C$, \eqref{eq:running-time} is quasipolynomial.
Together with~\cref{cor:unweighted}, this gives a~deterministic QPTAS for \mis on~$\mathcal C$.

\section{Vertex weights}\label{sec:weights}

We reduce \swmis to \smis by rounding the vertex weights and replacing each vertex by an independent set whose size is determined by its rounded weight.
We first show that this operation preserves strongly edge-sublinear separators, possibly with a~larger exponent.

An \emph{independent blow-up} of~$G$ is obtained by replacing each vertex~$v$ by a~nonempty independent set~$B_v$.
For distinct~$u,v$, all edges between~$B_u$ and~$B_v$ are present if $uv\in E(G)$, and none are present otherwise.
Let~$\mathcal C^{\mathrm{bl}}$ be the class of independent blow-ups of graphs in~$\mathcal C$.

\begin{lemma}\label{lem:blow-ups}
  Suppose~$\mathcal C$ is hereditary and satisfies~\eqref{eq:treewidth}.
  Then~$\mathcal C^{\mathrm{bl}}$ is hereditary, and every $m$-edge graph~$G'\in\mathcal C^{\mathrm{bl}}$ satisfies
  \[
    \tw(G')=\Oh_{\mathcal C}\bigl(m^{1-1/(2a)}\bigr).
  \]
  Consequently, $\mathcal C^{\mathrm{bl}}$ has strongly edge-sublinear separators.
\end{lemma}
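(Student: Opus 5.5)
Heredity first: an induced subgraph of an independent blow-up of $G\in\mathcal C$ keeps a nonempty subset $B'_v\subseteq B_v$ for each $v$ in some set $U\subseteq V(G)$. It is therefore an independent blow-up of $G[U]$, which lies in $\mathcal C$ because $\mathcal C$ is hereditary.

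For the treewidth bound, let $G'$ be an $m$-edge blow-up of $G\in\mathcal C$ with parts $B_v$. We may assume $G'$ has no isolated vertices: deleting them does not change the treewidth, and the result is still in $\mathcal C^{\mathrm{bl}}$. Then every $v$ with $|B_v|\geq1$ has a neighbour in $G$. Fix a threshold $s$, to be chosen later, and call $v$ \emph{heavy} if $|B_v|>s$ and \emph{light} otherwise. Let $L$ and $D$ be the sets of light and heavy vertices.

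We bound the two parts separately.
\begin{itemize}
\item \emph{Light part.} Every edge $uv$ of $G$ yields $|B_u||B_v|\geq1$ edges of $G'$, so $|E(G)|\leq m$, and in particular $|E(G[L])|\leq m$. By \eqref{eq:treewidth}, $G[L]$ has a tree-decomposition of width $\Oh_{\mathcal C}(m^{1-1/a})$. Replacing each $v$ by $B_v$ in every bag gives a tree-decomposition of $G'[\bigcup_{v\in L}B_v]$ of width $\Oh(s\,m^{1-1/a})$. This works because edges of the blow-up only join $B_u$ and $B_v$ with $uv\in E(G)$.
\item \emph{Heavy part.} Each heavy $v$ has a neighbour $w$, and the edges between $B_v$ and $B_w$ number $|B_v||B_w|\geq|B_v|>s$. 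These edges are charged to $v$ and at most doubly overcounted over heavy vertices, so $\sum_{v\in D}|B_v|\leq 2m$. Also $|D|\leq 2m/s$.
\end{itemize}

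The main obstacle is combining the parts. Adding all of $\bigcup_{v\in D}B_v$ to every bag would cost $\Oh(m)$, which is far too much. Instead, I would exploit the fact that the vertices of $B_v$ are twins. Keep one representative $b_v\in B_v$ for each heavy $v$, giving the graph $G''$, an induced subgraph of $G'$. A twin copy can be attached by duplicating, in a tree-decomposition, the bags containing the representative; this at most doubles the width only if done carefully, since naively it multiplies the width by the blow-up factor.

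The cleaner route I would commit to is this. The treewidth of a blow-up is at most $\tw$ of the quotient weighted by part sizes. Concretely, take a tree-decomposition of $G$ itself of width $w=\Oh(|E(G)|^{1-1/a})$, and replace each bag $X$ by $\bigcup_{v\in X}B_v$. A light vertex contributes at most $s$, and heavy vertices contribute their full parts. Balanced separators are the better tool for the heavy parts. Using separation number and \cref{thm:treewidth}, it suffices to find in every induced subgraph of $G'$ a balanced separator of size $\Oh(m^{1-1/(2a)})$.

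Given such a subgraph, which is again a blow-up, weight each $v$ by $|B_v|$. Take a weighted balanced separator of $G$; such separators exist of size $\Oh(\tw(G))$ with respect to vertex weights, by the standard argument via a tree-decomposition. Its light vertices cost at most $s\cdot\Oh(m^{1-1/a})$ in $G'$. For heavy vertices, I would first put $D$ aside. Every heavy part is isolated from other heavy parts unless their edge set is large, since $|B_u||B_v|>s^2$ for adjacent heavy $u,v$. Hence $G[D]$ has fewer than $m/s^2$ edges. A heavy vertex with a light neighbour only connects through light vertices, which are handled above.

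So I would take $S=D\cup S_L$, meaning the separator includes the heavy vertices of $G$ but only as representatives, and then use that twins can be split among components. This last step is where I expect difficulty. The choice $s=m^{1/(2a)}$ balances the light cost $s\,m^{1-1/a}=m^{1-1/(2a)}$ against the heavy count $2m/s=\Oh(m^{1-1/(2a)})$, matching the claimed exponent. It suggests the intended proof simply puts all of the at most $2m/s$ heavy vertices (as whole parts removed or via representatives) into the separator. I would verify that removing whole heavy parts is affordable by reproving that heavy parts behave like single vertices for balance, and fall back on the twin-splitting argument if not. Finally, $1-1/(2a)<1$ gives strongly edge-sublinear separators via \cref{thm:treewidth}.
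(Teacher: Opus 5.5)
Your heredity argument, the bound for the light parts, and the choice $s\approx m^{1/(2a)}$ are all fine. The gap is the step you yourself call the main obstacle, namely combining light and heavy parts: it is never carried out, and the options you leave open do not work.
\begin{itemize}
\item Deleting whole heavy parts costs $\sum_{v\in D}|B_v|$, which can be of order~$m$. The star $K_{1,m}$ is a blow-up of $K_2$ whose heavy part has $m$ vertices, yet it has treewidth~$1$.
\item Putting one representative of each heavy part into the separator separates nothing. The remaining copies in $B_v$ have exactly the same neighbourhood.
\item The remark that a heavy vertex with a light neighbour ``only connects through light vertices, which are handled above'' is unjustified. Those light vertices are not in your separator $D\cup S_L$, and the edges between heavy and light parts are not covered by your light-part tree-decomposition.
\end{itemize}

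The missing idea is to delete high-degree \emph{vertices of $G'$} rather than heavy \emph{parts}. Let $S$ be the set of vertices of $G'$ of degree greater than~$s$; then $|S|\leq 2m/s$ by the degree sum. If $|B_v|>s$, every neighbour of a vertex of $B_v$ is adjacent to all of $B_v$. It therefore has degree greater than~$s$ and lies in~$S$, so every vertex of $B_v\setminus S$ is isolated in $G'-S$. You nearly had this: you noted that a neighbouring part $B_w$ sends $|B_v||B_w|>s$ edges to $B_v$, but you charged them to $v$ instead of observing that each vertex of $B_w$ has degree at least $|B_v|$. Apart from isolated vertices, $G'-S$ thus arises from $G$ by replacing each vertex by at most $s$ pairwise nonadjacent copies. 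Now build a tree-decomposition of $G'$ in three steps:
\begin{itemize}
\item in each bag of a tree-decomposition of $G$, replace each vertex by its surviving nonisolated copies;
\item add the isolated vertices as singleton bags;
\item add $S$ to every bag.
\end{itemize}
This gives $\tw(G')+1\leq 2m/s+s(\tw(G)+1)$. With $\tw(G)+1=\Oh_{\mathcal C}(m^{1-1/a})$ and $s=\lceil m^{1/(2a)}\rceil$, this is the claimed bound; no weighted separators or twin-splitting are needed.
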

\begin{proof}
  The class~$\mathcal C^{\mathrm{bl}}$ is hereditary: an induced subgraph of an independent blow-up of~$G$ is an independent blow-up of~$G[U]$, where~$U$ consists of the vertices of~$G$ with at least one surviving copy.

  Let~$G'$ be an independent blow-up of~$G\in\mathcal C$, and assume $m:=|E(G')|\geq1$.
  For an integer $s\geq1$, set
  \[
    S:=\{v\in V(G'):|N_{G'}(v)|>s\}.
  \]
  By the degree sum formula, $|S|\leq2m/s$.
  Consider a~set~$B_v$ with more than~$s$ vertices.
  Every neighbor of~$B_v$ has degree at least $|B_v|>s$, and hence belongs to~$S$.
  Thus all vertices of~$B_v\setminus S$ are isolated in~$G'-S$.

  It follows that, apart from isolated vertices, $G'-S$ is obtained from~$G$ by replacing each vertex by a~possibly empty independent set of at most~$s$ vertices.
  A~tree-decomposition of~$G$ can therefore be transformed into one of this graph by replacing each vertex in every bag by its copies that are nonisolated in~$G'-S$.
  The bag sizes increase by a~factor of at most~$s$.
  Adding the isolated vertices in singleton bags, and then adding~$S$ to every bag, gives
  \[
    \tw(G')+1\leq\frac{2m}{s}+s\bigl(\tw(G)+1\bigr).
  \]
  Since $|E(G)|\leq m$, \eqref{eq:treewidth} gives $\tw(G)+1=\Oh_{\mathcal C}(m^{1-1/a})$.
  Choosing $s:=\lceil m^{1/(2a)}\rceil$ yields
  \[
    \tw(G')+1
      =\Oh_{\mathcal C}\left(\frac{m}{s}+s m^{1-1/a}\right)
      =\Oh_{\mathcal C}\bigl(m^{1-1/(2a)}\bigr).
  \]
  For $m=0$, the graph~$G'$ is edgeless.
  Finally, a~tree-decomposition of width~$w$ gives a~balanced separator of size at most~$w+1$.
\end{proof}

\mainresult*

\begin{proof}
  Let~$G\in\mathcal C$ have~$n$ vertices with nonnegative rational weights~$w_v$, and fix rational $0<\varepsilon<1$.
  The cases $n\leq1$ and $w_v=0$ for all~$v$ are immediate.
  Otherwise, put
  \[
    w_{\max}:=\max_{v\in V(G)}w_v,
    \qquad \tau:=\frac{\varepsilon w_{\max}}{2n},
    \qquad t_v:=\left\lfloor\frac{w_v}{\tau}\right\rfloor.
  \]
  Construct~$G'$ by replacing each vertex~$v$ by an independent set of size~$t_v$, omitting vertices with $t_v=0$.
  Then
  \[
    |V(G')|=\sum_{v\in V(G)}t_v
      \leq\sum_{v\in V(G)}\frac{w_v}{\tau}
      \leq\frac{nw_{\max}}{\tau}
      =\frac{2n^2}{\varepsilon}.
  \]
  Since~$\mathcal C$ is hereditary, $G'\in\mathcal C^{\mathrm{bl}}$.
  By~\cref{lem:blow-ups}, we may apply the unweighted QPTAS from~\cref{sec:algorithm} to~$G'$ with accuracy~$\varepsilon/2$, obtaining an independent set~$I'$.
  Return the set~$I$ of vertices of~$G$ having at least one copy in~$I'$; $I$ is an independent set.

  Write $w(J):=\sum_{v\in J}w_v$, and let~$I^*$ be an optimum weighted independent set of~$G$, of weight $\alpha_w(G):=w(I^*)$.
  Since all copies of the vertices in~$I^*$ form an independent set in~$G'$, we have
  \[
    \tau\alpha(G')\geq\tau\sum_{v\in I^*}t_v
      \geq\sum_{v\in I^*}(w_v-\tau)
      =\alpha_w(G)-|I^*|\tau
      \geq\alpha_w(G)-n\tau
      \geq(1-\varepsilon/2)\alpha_w(G),
  \]
  where the last inequality uses $n\tau=\varepsilon w_{\max}/2\leq\varepsilon\alpha_w(G)/2$.
  Since $w_v\geq\tau t_v$ and~$I'$ contains at most~$t_v$ copies of each~$v\in I$, we have $w(I)\geq\tau|I'|$.
  Thus
  \[
    w(I)\geq(1-\varepsilon/2)\tau\alpha(G')
      \geq(1-\varepsilon/2)^2\alpha_w(G)
      \geq(1-\varepsilon)\alpha_w(G).
  \]

  The integers~$t_v$ are computed by exact rational arithmetic, in time polynomial in the binary encoding length of the weights and~$\varepsilon$.
  The number of copies is polynomial in~$n$ and~$1/\varepsilon$.
  By~\cref{lem:blow-ups}, the parameter~$a$ in~\eqref{eq:running-time} is replaced by~$2a$ for~$\mathcal C^{\mathrm{bl}}$.
  Consequently, the unweighted call takes time
  \[
    2^{\Oh_{\mathcal C}\left(\varepsilon^{-(8a-2)}
       (\log n+\log\varepsilon^{-1})^{16a-1}\right)}.
  \]
  Since $n\geq2$, this is $2^{(\varepsilon^{-1}\log n)^{\Oh_{\mathcal C}(1)}}$.
  Including the polynomial work on the original weights gives the stated running time.
\end{proof}

\subparagraph*{AI disclosure.}
GPT-6 Pro first found a randomized QPTAS for \smis in any $H$-induced-minor-free class.
The initial argument relied on the exponent not exceeding $1/2$ in the balanced separators of size $\widetilde{O}(m^{1/2})$ for $m$-edge graphs.
We thought that this should only be an artifact of the proof, and suggested the new notion of strongly edge-sublinear separators.
Our intuition turned out to be correct.
Making the algorithm deterministic and able to handle weights was relatively simple.
The authors rewrote the eventual proof.

This is a typical example where one needs to be well familiar with a~selection of techniques, here: (1) the Leighton--Rao multicommodity flow--cut theorem and its use in the context of excluded induced minors following Korhonen and Lokshtanov~\cite{Korhonen24}, (2) conditioning and entropy-decrease arguments, (3) the round-or-cut approach.

\noindent An autoformalized Lean proof can be found on Lax: \url{https://laxarchive.org/lax-263307}.

\FloatBarrier

\begingroup
\interlinepenalty=10000
\bibliography{round-or-cut}

@article{Baker94,
  author       = {Brenda S. Baker},
  title        = {Approximation Algorithms for {NP}-Complete Problems on Planar Graphs},
  journal      = {J. {ACM}},
  volume       = {41},
  number       = {1},
  pages        = {153--180},
  year         = {1994},
  url          = {https://doi.org/10.1145/174644.174650},
  doi          = {10.1145/174644.174650},
  bibsource    = {dblp computer science bibliography, https://dblp.org}
}

@inproceedings{Korhonen24,
  author       = {Tuukka Korhonen and
                  Daniel Lokshtanov},
  editor       = {David P. Woodruff},
  title        = {Induced-Minor-Free Graphs: Separator Theorem, Subexponential Algorithms,
                  and Improved Hardness of Recognition},
  booktitle    = {Proceedings of the 2024 Annual {ACM-SIAM} Symposium on Discrete
                  Algorithms ({SODA} 2024)},
  pages        = {5249--5275},
  publisher    = {{SIAM}},
  year         = {2024},
  url          = {https://doi.org/10.1137/1.9781611977912.188},
  doi          = {10.1137/1.9781611977912.188},
  bibsource    = {dblp computer science bibliography, https://dblp.org},
  eprint       = {2308.04795},
  archivePrefix = {arXiv}
}

@inproceedings{Karp72,
  author       = {Richard M. Karp},
  editor       = {Raymond E. Miller and
                  James W. Thatcher},
  title        = {Reducibility Among Combinatorial Problems},
  series       = {The {IBM} Research Symposia Series},
  pages        = {85--103},
  publisher    = {Plenum Press, New York},
  year         = {1972},
  url          = {https://doi.org/10.1007/978-1-4684-2001-2\_9},
  doi          = {10.1007/978-1-4684-2001-2_9},
  bibsource    = {dblp computer science bibliography, https://dblp.org}
}

@article{Hastad99,
  author = {H{\aa}stad, Johan},
  title = {Clique Is Hard to Approximate within {$n^{1-\varepsilon}$}},
  journal = {Acta Mathematica},
  volume = {182},
  number = {1},
  pages = {105--142},
  year = {1999},
  doi = {10.1007/BF02392825}
}

@article{Zuckerman07,
  author = {Zuckerman, David},
  title = {Linear Degree Extractors and the Inapproximability of {Max Clique} and {Chromatic Number}},
  journal = {Theory of Computing},
  volume = {3},
  number = {6},
  pages = {103--128},
  year = {2007},
  doi = {10.4086/toc.2007.v003a006}
}

@inproceedings{Lee17,
  author       = {James R. Lee},
  editor       = {Christos H. Papadimitriou},
  title        = {Separators in Region Intersection Graphs},
  booktitle    = {8th Innovations in Theoretical Computer Science Conference, {ITCS}
                  2017, January 9-11, 2017, Berkeley, CA, {USA}},
  series       = {LIPIcs},
  volume       = {67},
  pages        = {1:1--1:8},
  publisher    = {Schloss Dagstuhl - Leibniz-Zentrum f{\"{u}}r Informatik},
  year         = {2017},
  url          = {https://doi.org/10.4230/LIPIcs.ITCS.2017.1},
  doi          = {10.4230/LIPICS.ITCS.2017.1},
  bibsource    = {dblp computer science bibliography, https://dblp.org}
}

@article{Dallard24c,
  author = {Cl{\'{e}}ment Dallard and
                  Martin Milani{\v{c}} and
                  Kenny {\v{S}}torgel},
  title = {Treewidth versus clique number. {III.} {T}ree-independence number of
                  graphs with a forbidden structure},
  journal = {Journal of Combinatorial Theory, Series B},
  volume = {167},
  pages = {338--391},
  year = {2024},
  doi = {10.1016/j.jctb.2024.03.005},
  eprint = {2206.15092},
  archivePrefix = {arXiv}
}

@phdthesis{Gartland23,
  author = {Peter Gartland},
  title = {Quasi-Polynomial Time Techniques for Independent Set and Beyond in
                  Hereditary Graph Classes},
  school = {University of California, Santa Barbara},
  year = {2023},
  url = {https://www.escholarship.org/uc/item/0kk6d2jv}
}

@article{Grzesik22,
  author = {Andrzej Grzesik and Tereza Klimo{\v s}ov{\'a} and Marcin Pilipczuk and Micha{\l} Pilipczuk},
  title = {Polynomial-Time Algorithm for Maximum Weight Independent Set on {$P_6$}-Free Graphs},
  journal = {ACM Trans. Algorithms},
  volume = {18},
  number = {1},
  pages = {4:1--4:57},
  year = {2022},
  doi = {10.1145/3414473},
  url = {https://doi.org/10.1145/3414473}
}

@inproceedings{Gartland20,
  author = {Gartland, Peter and Lokshtanov, Daniel},
  title = {Independent Set on {$P_k$}-Free Graphs in Quasi-Polynomial Time},
  booktitle = {2020 IEEE 61st Annual Symposium on Foundations of Computer Science (FOCS)},
  pages = {613--624},
  publisher = {IEEE},
  year = {2020},
  doi = {10.1109/FOCS46700.2020.00063}
}

@inproceedings{Gartland21,
  author = {Peter Gartland and Daniel Lokshtanov and Marcin Pilipczuk and Micha{\l} Pilipczuk and Pawe{\l} Rz{\k{a}}{\.z}ewski},
  title = {Finding large induced sparse subgraphs in {$C_{>t}$}-free graphs in quasipolynomial time},
  booktitle = {STOC},
  year = {2021},
  eprint = {2007.11402v3},
  archivePrefix = {arXiv},
  doi = {10.1145/3406325.3451034},
  url = {https://doi.org/10.1145/3406325.3451034}
}

@article{Robertson86a,
  author  = {Neil Robertson and Paul D. Seymour},
  title   = {Graph Minors. {II}. {A}lgorithmic Aspects of Tree-Width},
  journal = {Journal of Algorithms},
  volume  = {7},
  number  = {3},
  pages   = {309--322},
  year    = {1986},
  doi     = {10.1016/0196-6774(86)90023-4}
}

@article{Dvorak19,
  author  = {Zden{\v{e}}k Dvo{\v{r}}{\'a}k and Sergey Norin},
  title   = {Treewidth of Graphs with Balanced Separations},
  journal = {Journal of Combinatorial Theory, Series B},
  volume  = {137},
  pages   = {137--144},
  year    = {2019},
  doi     = {10.1016/j.jctb.2018.12.007}
}

@inproceedings{Dinur14,
  author    = {Irit Dinur and David Steurer},
  title     = {Analytical Approach to Parallel Repetition},
  booktitle = {Proceedings of the Forty-Sixth Annual ACM Symposium on Theory of Computing (STOC 2014)},
  pages     = {624--633},
  publisher = {ACM},
  year      = {2014},
  doi       = {10.1145/2591796.2591884}
}

@article{Leighton99,
  author = {Leighton, Tom and Rao, Satish},
  title = {Multicommodity Max-Flow Min-Cut Theorems and Their Use in Designing Approximation Algorithms},
  journal = {Journal of the ACM},
  volume = {46},
  number = {6},
  pages = {787--832},
  year = {1999},
  doi = {10.1145/331524.331526}
}

@book{Grotschel93,
  author = {Gr{\"o}tschel, Martin and Lov{\'a}sz, L{\'a}szl{\'o} and Schrijver, Alexander},
  title = {Geometric Algorithms and Combinatorial Optimization},
  edition = {Second},
  series = {Algorithms and Combinatorics},
  volume = {2},
  publisher = {Springer},
  year = {1993},
  doi = {10.1007/978-3-642-78240-4}
}

@article{Hoory06,
  author = {Hoory, Shlomo and Linial, Nathan and Wigderson, Avi},
  title = {Expander Graphs and Their Applications},
  journal = {Bulletin of the American Mathematical Society},
  volume = {43},
  number = {4},
  pages = {439--561},
  year = {2006},
  doi = {10.1090/S0273-0979-06-01126-8}
}

@article{Grohe09,
  author = {Martin Grohe and D{\'a}niel Marx},
  title = {On Tree Width, Bramble Size, and Expansion},
  journal = {Journal of Combinatorial Theory, Series B},
  volume = {99},
  number = {1},
  pages = {218--228},
  year = {2009},
  doi = {10.1016/j.jctb.2008.06.004}
}

@article{Sherali90,
  author = {Sherali, Hanif D. and Adams, Warren P.},
  title = {A Hierarchy of Relaxations between the Continuous and Convex Hull Representations for Zero-One Programming Problems},
  journal = {SIAM Journal on Discrete Mathematics},
  volume = {3},
  number = {3},
  pages = {411--430},
  year = {1990},
  doi = {10.1137/0403036}
}

@inproceedings{Barak11,
  author = {Barak, Boaz and Raghavendra, Prasad and Steurer, David},
  title = {Rounding Semidefinite Programming Hierarchies via Global Correlation},
  booktitle = {52nd Annual IEEE Symposium on Foundations of Computer Science (FOCS 2011)},
  pages = {472--481},
  publisher = {IEEE},
  year = {2011},
  doi = {10.1109/FOCS.2011.95}
}

@inproceedings{RT12,
  author = {Raghavendra, Prasad and Tan, Ning},
  title = {Approximating {CSPs} with Global Cardinality Constraints Using {SDP} Hierarchies},
  booktitle = {Proceedings of the Twenty-Third Annual ACM-SIAM Symposium on Discrete Algorithms (SODA 2012)},
  pages = {373--387},
  publisher = {SIAM},
  year = {2012},
  doi = {10.1137/1.9781611973099.33}
}

@inproceedings{CarrFLP00,
  author    = {Robert D. Carr and Lisa K. Fleischer and
               Vitus J. Leung and Cynthia A. Phillips},
  title     = {Strengthening Integrality Gaps for Capacitated
               Network Design and Covering Problems},
  booktitle = {Proceedings of the 11th Annual ACM-SIAM
               Symposium on Discrete Algorithms (SODA)},
  pages     = {106--115},
  year      = {2000},
  publisher = {SIAM},
  doi       = {10.1145/338219.338241}
}

@inproceedings{Chakrabarty18,
  author = {Chakrabarty, Deeparnab and Negahbani, Maryam},
  title = {Generalized Center Problems with Outliers},
  booktitle = {45th International Colloquium on Automata, Languages, and Programming (ICALP 2018)},
  series = {LIPIcs},
  volume = {107},
  pages = {30:1--30:14},
  publisher = {Schloss Dagstuhl -- Leibniz-Zentrum f{\"u}r Informatik},
  year = {2018},
  doi = {10.4230/LIPIcs.ICALP.2018.30}
}

@misc{Bonnet26b,
  author = {Bonnet, {\'E}douard and Le, Hung and Pilipczuk, Marcin and Pilipczuk, Micha{\l}},
  title = {Coarse Balanced Separators in Fat-Minor-Free Graphs},
  year = {2026},
  eprint = {2604.11318},
  archivePrefix = {arXiv},
}

@misc{Bonnet26c,
  author = {{\'{E}}douard Bonnet and Yeonsu Chang},
  title = {Max Independent Set Remains {NP}-hard when Excluding a Planar Induced Minor},
  year = {2026},
  eprint = {2609.11285},
  archivePrefix = {arXiv},
  primaryClass = {cs.CC},
}

@inproceedings{Adamaszek13,
  author = {Adamaszek, Anna and Wiese, Andreas},
  title = {Approximation Schemes for Maximum Weight Independent Set of Rectangles},
  booktitle = {54th Annual IEEE Symposium on Foundations of Computer Science (FOCS 2013)},
  year = {2013},
  pages = {400--409},
  publisher = {IEEE},
  doi = {10.1109/FOCS.2013.50}
}

@article{Adamaszek19,
  author = {Adamaszek, Anna and Har-Peled, Sariel and Wiese, Andreas},
  title = {Approximation Schemes for Independent Set and Sparse Subsets of Polygons},
  journal = {Journal of the ACM},
  year = {2019},
  volume = {66},
  number = {4},
  pages = {29:1--29:40},
  doi = {10.1145/3326122}
}

@article{Pilipczuk18,
  author = {Pilipczuk, Micha{\l} and van Leeuwen, Erik Jan and Wiese, Andreas},
  title = {Quasi-Polynomial Time Approximation Schemes for Packing and Covering Problems in Planar Graphs},
  journal = {Algorithmica},
  volume = {82},
  number = {6},
  pages = {1703--1739},
  year = {2020},
  doi = {10.1007/s00453-019-00670-w}
}

@article{Dvorak18,
  author = {Dvo{\v r}{\'a}k, Zden{\v e}k},
  title = {On Classes of Graphs with Strongly Sublinear Separators},
  journal = {European Journal of Combinatorics},
  volume = {71},
  pages = {1--11},
  year = {2018},
  doi = {10.1016/j.ejc.2018.02.032}
}

@inproceedings{Dvorak21,
  author = {Dvo{\v r}{\'a}k, Zden{\v e}k and Lahiri, Abhiruk},
  title = {Approximation Schemes for Bounded Distance Problems on Fractionally Treewidth-Fragile Graphs},
  booktitle = {29th Annual European Symposium on Algorithms (ESA 2021)},
  series = {Leibniz International Proceedings in Informatics (LIPIcs)},
  volume = {204},
  pages = {40:1--40:10},
  year = {2021},
  publisher = {Schloss Dagstuhl -- Leibniz-Zentrum f{\"u}r Informatik},
  doi = {10.4230/LIPIcs.ESA.2021.40}
}

@article{Har-Peled17,
  author = {Har-Peled, Sariel and Quanrud, Kent},
  title = {Approximation Algorithms for Polynomial-Expansion and Low-Density Graphs},
  journal = {SIAM Journal on Computing},
  volume = {46},
  number = {6},
  pages = {1712--1744},
  year = {2017},
  doi = {10.1137/16M1079336}
}

@misc{Har-Peled23,
  author = {Har-Peled, Sariel},
  title = {Approximately: Independence Implies Vertex Cover},
  year = {2023},
  eprint = {2308.00840},
  archivePrefix = {arXiv}
}

@inproceedings{Bonnet26d,
  author = {Bonnet, {\'E}douard and Czy{\.z}ewska, Jadwiga and Masa{\v r}{\'\i}k, Tom{\'a}{\v s} and Pilipczuk, Marcin and Rz{\k a}{\.z}ewski, Pawe{\l}},
  title = {{QPTAS} for {MWIS} and Finding Large Sparse Induced Subgraphs in Graphs with Few Independent Long Holes},
  booktitle = {20th Scandinavian Symposium on Algorithm Theory (SWAT 2026)},
  series = {Leibniz International Proceedings in Informatics (LIPIcs)},
  volume = {370},
  pages = {9:1--9:14},
  year = {2026},
  publisher = {Schloss Dagstuhl -- Leibniz-Zentrum f{\"u}r Informatik},
  doi = {10.4230/LIPIcs.SWAT.2026.9}
}

@article{Dvorak16,
  author = {Dvo{\v r}{\'a}k, Zden{\v e}k},
  title = {Sublinear Separators, Fragility and Subexponential Expansion},
  journal = {European Journal of Combinatorics},
  volume = {52},
  pages = {103--119},
  year = {2016},
  doi = {10.1016/j.ejc.2015.09.001}
}

@article{BiswalLeeRao2010,
  author = {Punyashloka Biswal and James R. Lee and Satish Rao},
  title = {Eigenvalue Bounds, Spectral Partitioning, and Metrical Deformations via Flows},
  journal = {Journal of the ACM},
  volume = {57},
  number = {3},
  pages = {13:1--13:23},
  year = {2010},
  doi = {10.1145/1706591.1706593},
  eprint = {0808.0148},
  archivePrefix = {arXiv}
}

@article{Matousek14,
  author = {Matou{\v s}ek, Ji{\v r}{\'\i}},
  title = {Near-Optimal Separators in String Graphs},
  journal = {Combinatorics, Probability and Computing},
  volume = {23},
  number = {1},
  pages = {135--139},
  year = {2014},
  doi = {10.1017/S0963548313000400}
}

@article{Chudnovsky24,
  author = {Maria Chudnovsky and Marcin Pilipczuk and Micha{\l} Pilipczuk and St{\'e}phan Thomass{\'e}},
  title = {Quasi-Polynomial Time Approximation Schemes for the Maximum Weight Independent Set Problem in {$H$}-Free Graphs},
  journal = {SIAM Journal on Computing},
  volume = {53},
  number = {1},
  pages = {47--86},
  year = {2024},
  doi = {10.1137/20M1333778},
  eprint = {1907.04585},
  archivePrefix = {arXiv}
}

@misc{Chudnovsky25,
  author = {Maria Chudnovsky and J. Pascal Gollin and Matja{\v z} Krnc and Martin Milani{\v c}},
  title = {Dominated Balanced Separators in Wheel-Induced-Minor-Free Graphs},
  year = {2025},
  eprint = {2512.12329},
  archivePrefix = {arXiv}
}

@misc{Czyzewska26,
  author = {Jadwiga Czy{\.z}ewska and Tom{\'a}{\v s} Masa{\v r}{\'\i}k and Marcin Pilipczuk and Amadeus Reinald and Pawe{\l} Rz{\k a}{\.z}ewski},
  title = {Induced {Erd{\H o}s--P{\'o}sa} Property for Long Holes, Long Thetas, and Beyond},
  year = {2026},
  eprint = {2607.07697},
  archivePrefix = {arXiv}
}

@misc{Nikabadi26,
  author = {Nikabadi, Amir and Rz{\k a}{\.z}ewski, Pawe{\l}},
  title = {Induced Packing Treewidth},
  year = {2026},
  eprint = {2607.07595},
  archivePrefix = {arXiv},
  primaryClass = {math.CO}
}

@article{Bonnet25,
  author = {Bonnet, {\'E}douard and Hickingbotham, Robert},
  title = {Induced Minors and Region Intersection Graphs},
  journal = {Innovations in Graph Theory},
  volume = {2},
  pages = {313--327},
  year = {2025},
  doi = {10.5802/igt.15}
}

@article{Arnborg89,
  author = {Arnborg, Stefan and Proskurowski, Andrzej},
  title = {Linear Time Algorithms for {NP}-Hard Problems Restricted to Partial {$k$}-Trees},
  journal = {Discrete Applied Mathematics},
  volume = {23},
  number = {1},
  pages = {11--24},
  year = {1989},
  doi = {10.1016/0166-218X(89)90031-0}
}

@article{Courcelle00,
  author = {Courcelle, Bruno and Makowsky, Johann A. and Rotics, Udi},
  title = {Linear Time Solvable Optimization Problems on Graphs of Bounded Clique-Width},
  journal = {Theory of Computing Systems},
  volume = {33},
  number = {2},
  pages = {125--150},
  year = {2000},
  doi = {10.1007/s002249910009}
}

@article{Oum06,
  author = {Oum, Sang-il and Seymour, Paul D.},
  title = {Approximating Clique-Width and Branch-Width},
  journal = {Journal of Combinatorial Theory, Series B},
  volume = {96},
  number = {4},
  pages = {514--528},
  year = {2006},
  doi = {10.1016/j.jctb.2005.10.006}
}

@article{Grohe03,
  author = {Grohe, Martin},
  title = {Local Tree-Width, Excluded Minors, and Approximation Algorithms},
  journal = {Combinatorica},
  volume = {23},
  number = {4},
  pages = {613--632},
  year = {2003},
  doi = {10.1007/s00493-003-0037-9}
}

@article{Gavril73,
  author = {Gavril, F{\u a}nic{\u a}},
  title = {Algorithms for a Maximum Clique and a Maximum Independent Set of a Circle Graph},
  journal = {Networks},
  volume = {3},
  number = {3},
  pages = {261--273},
  year = {1973},
  doi = {10.1002/net.3230030305}
}

@article{Chan12,
  author = {Chan, Timothy M. and Har-Peled, Sariel},
  title = {Approximation Algorithms for Maximum Independent Set of Pseudo-Disks},
  journal = {Discrete \& Computational Geometry},
  volume = {48},
  pages = {373--392},
  year = {2012},
  doi = {10.1007/s00454-012-9417-5}
}

@article{Erlebach05,
  author = {Erlebach, Thomas and Jansen, Klaus and Seidel, Eike},
  title = {Polynomial-Time Approximation Schemes for Geometric Intersection Graphs},
  journal = {SIAM Journal on Computing},
  volume = {34},
  number = {6},
  pages = {1302--1323},
  year = {2005},
  doi = {10.1137/S0097539702402676}
}

@article{Chlebik07Boxes,
  author = {Miroslav Chleb{\'i}k and Janka Chleb{\'i}kov{\'a}},
  title = {The Complexity of Combinatorial Optimization Problems on {$d$}-Dimensional Boxes},
  journal = {SIAM Journal on Discrete Mathematics},
  volume = {21},
  number = {1},
  pages = {158--169},
  year = {2007},
  doi = {10.1137/050629276}
}
\endgroup

\end{document}